\theoremstyle{plain} \newtheorem{lemma}{Lemma}
\theoremstyle{plain} \newtheorem{proposition}{Proposition}
\theoremstyle{plain} 
\theoremstyle{definition} 
\theoremstyle{definition} \newtheorem{definition}{Definition}
\theoremstyle{definition} \newtheorem{assumption}{Assumption}
\theoremstyle{definition} 
\theoremstyle{plain} \newtheorem*{lemma*}{Lemma}
\theoremstyle{plain} \newtheorem*{proposition*}{Proposition}
\theoremstyle{plain} \newtheorem*{theorem*}{Theorem}
\theoremstyle{definition} \newtheorem*{corollary*}{Corollary}
\theoremstyle{definition} \newtheorem*{definition*}{Definition}
\theoremstyle{definition} \newtheorem*{assumption*}{Assumption}
\theoremstyle{definition} \newtheorem*{example*}{Example}
\newcommand{\R}{\ensuremath{\mathbb{R}}}										
\newcommand{\argmax}{\operatornamewithlimits{argmax}}
\newcommand{\argmin}{\operatornamewithlimits{argmin}}
\newcommand{\maximize}{\operatornamewithlimits{maximize}}
\newcommand{\E}[1]{\mathbb{E}\left[#1\right]}								
\newcommand{\Prob}[1]{\mathbb{P}\left[#1\right]}						
\newcommand{\indicator}[1]{\textbf{1}_{\left\{#1\right\}}} 	
\newcommand{\mb}[1]{\ensuremath{\mathbf{#1}}}
\newcommand{\bs}[1]{\boldsymbol #1}													
\newcommand{\mc}[1]{\ensuremath{\mathcal{#1}}}							
\newcommand{\beq}[1]{\begin{equation} \label{eq:#1}}
\newcommand{\eeq}{\end{equation}}
\newcommand{\beqn}{\begin{equation*}}
\newcommand{\eeqn}{\end{equation*}}
\newcommand{\setN}{\ensuremath{\mathcal{N}}}	
\newcommand{\setS}{\ensuremath{\mathcal{S}}}	
\newcommand{\setB}{\ensuremath{\mathcal{B}}}	
\newcommand{\setX}{\ensuremath{\mathcal{X}}}	
\newcommand{\setA}{\ensuremath{\mathscr{A}}}	
\newcommand{\setG}{\ensuremath{\mathcal{G}}} 	
\newcommand{\setE}{\ensuremath{\mathcal{E}}}	
\newcommand{\pb}[1]{p(#1)}
\newcommand{\pv}[2]{p(#1 | #2)}
\title{Revenue Optimal Auction for Single-Minded Buyers}
\author{
Vineet Abhishek\thanks{Department of Electrical and Computer Engineering, University of Illinois at Urbana-Champaign. \newline Contact: {\tt \small abhishe1@illinois.edu}}
~and Bruce Hajek\thanks{Department of Electrical and Computer Engineering, University of Illinois at Urbana-Champaign. \newline Contact: {\tt \small b-hajek@illinois.edu}
\newline The authors gratefully acknowledge several helpful conversations with Steven R. Williams, Department of Economics, University of Illinois at Urbana-Champaign.}
}
\begin{document}
\date{September 12, 2010}
\maketitle

\begin{abstract}
We study the problem of characterizing revenue optimal auctions for single-minded buyers. Each buyer is interested only in a specific bundle of items and has a value for the same. Both his bundle and its value are his private information. The bundles that buyers are interested in and their corresponding values are assumed to be realized from known probability distributions independent across the buyers. We identify revenue optimal auctions with a simple structure, if the conditional distribution of any buyer's valuation is nondecreasing, in the hazard rates ordering of probability distributions, as a function of the bundle the buyer is interested in. The revenue optimal auction is given by the solution of a maximum weight independent set problem. We provide a novel graphical construction of the weights and highlight important properties of the resulting auction.
\end{abstract}

\section{Introduction} \label{sec:introduction}
Consider $N$ buyers competing for a certain set of items offered by a seller. A buyer has a \textit{value} for each combination of the items (henceforth referred to as a \textit{bundle}) that he is interested in. This is the maximum price that he is willing to pay for the bundle. The seller's objective is to maximize his revenue from the sale. If the seller knew these values exactly, he could maximize his revenue by simply finding an allocation of the items among the buyers that has the maximum total value, and charging each buyer the exact value of the bundle allocated to him. However, the values that the buyers have for different bundles are their private information; the seller has imperfect information about these values. The seller's task is complicated further by the strategic behavior of the buyers; a buyer might misreport the values of the bundles he is interested in, if it is beneficial for him to do so. \textit{Combinatorial auctions} (henceforth CAs) offer a solution. CAs allow buyers the freedom to compete for any bundle of items and provide them incentives to report their private values truthfully. The allocation and the payments are determined by the competition among the buyers. However, the inherent problems of CAs limit their appeal. 

Often there are \textit{complementarities} among the items - a buyer can have a higher value for a bundle as a whole than the sum of values of the parts of the bundle. Different buyers may have different forms of complementarity. Because of complementarity, allocation of items in a CA requires solving a hard combinatorial optimization problem; a naive implementation can lead to the \textit{exposure problem} for the winners - a winner might end up getting only a part of his desired bundle, while still paying a high price for it. Moreover, the goal of \textit{revenue maximization} (referred to as \textit{optimality}) is very different from the much studied goal of social welfare maximization (referred to as \textit{efficiency}) in auction theory, e.g., the VCG mechanism \cite{Clarke71,Groves73,Vickrey61}. Theoretical results on revenue maximization are known only under simple settings, e.g., \cite{Myerson81,Hartline&Karlin07,Armstrong2000}. Characterizing revenue optimal CAs in full generality with multiple buyers, multiple items, and different values for the different bundles, still remains a challenging open problem, even if the complexity considerations are ignored.

This paper aims to address the issues of (1) dealing with complementarity, and (2) maximizing revenue, for CAs. The key to designing CAs with desirable properties is to understand and handle complementarity among the items. An extreme case of complementarity is when buyers are \textit{single minded}~\cite{Lehmann02}. Here, each buyer is interested only in a specific bundle and has a value for the same. Any allocation of items to a buyer that does not contain his desired bundle has zero value for him. Both the bundle that a buyer is interested in and its value are his private information. CA problems with single-minded buyers are more tractable than if buyers have general valuations. While the single-minded buyers model might appear as a simplifying assumption, no general result on revenue maximization is known even for this extreme case. Here a buyer has two dimensions for misreporting his preference - the bundle he is interested in, and the value of the bundle. Most of the existing literature on revenue optimal auctions studies problems which are one dimensional - a buyer has only one real number for misreporting his preference, e.g., Myerson's single item auction \cite{Myerson81}, the \textit{single-parameter} buyers model described in~\cite{Hartline&Karlin07}, and single-minded buyers with \textit{known} bundles\footnote{Single-minded buyers with \textit{known} bundles is also an instance of the single-parameter buyers.}~\cite{Ledyard07}. Thus, the single-minded buyer model can be thought of as an initial step towards solving the general CA problems. Also, see \cite{Lehmann02} for some real examples where buyers are single minded. Hence, we focus on the CAs with single-minded buyers.

In addition, we take a departure from the continuous variable models of economics and assume that the set of possible values that a buyer can have for his bundle is finite. While working with finite valuation sets is clearly relevant from the implementation point of view, it is interesting in its own right; e.g., we show later in this paper that the revenue equivalence result of~\cite{Myerson81} does not hold true when the valuation sets are finite. A related work was done by \cite{Elkind07} where a Bayesian optimal auction, when buyers' valuation sets are finite, is characterized. However, \cite{Elkind07} deals only with single item auctions.

We make the following contributions in this paper. We modify and extend the framework of~\cite{Elkind07} for multiple item auctions with single-minded buyers. We then find a sufficient condition under which a revenue optimal auction can be characterized for single-minded buyers. This sufficient condition is the monotonicity of the conditional distribution of any buyer's valuation, in the \textit{hazard rate ordering}, as a function of the bundle the buyer is interested in. An interpretation of this condition is as follows: if there are two bundles where one contains the other, then the larger bundle is likely to have a higher value. Such monotonicity property is intuitive for single-minded buyers. Such monotonicity property is intuitive for single-minded buyers. A similar hazard rate ordering condition appears in \cite{Dizdar09} in the context of dynamic knapsack problem. However, \cite{Dizdar09} (as well as \cite{Ledyard07}) assumes that the distributions also satisfy Myerson's nonintuitive regularity assumption \cite{Myerson81}. An important contribution of our paper is to show that such assumption is unnecessary. We present an algorithm for optimal auction as a solution of a maximum weight independent set problem, where weights are an appropriate mapping of buyers' valuations, called \textit{virtual valuations}. We also provide a novel graphical construction of the virtual valuations.

The rest of the paper is organized as follows. Section~\ref{sec:model} outlines our model, definitions, and notation. Section \ref{sec:opt-auction} characterizes an optimal auction for single-minded buyers, while Section \ref{sec:discussion} describes some of its important properties. We conclude in Section \ref{sec:conclusions}.

\section{Model and Notation} \label{sec:model}
Consider $N$ buyers competing for $S$ items that a seller wants to sell. The set of buyers is denoted by $\setN \triangleq \{1,2,\ldots,N\}$, and the set of items for sale is denoted by~$\setS$. Buyers are single minded - each buyer~$n$ is interested only in a specific bundle~$b_n^{*} \in 2^{\setS}$ and has a value $v_n^{*}$ for any bundle~$b_n$ such that $b_n \supseteq b_n^{*}$, while he has zero value for any other bundle. Here $2^{\setS}$ denotes the power set of~$\setS$. Notice that single-minded buyers enjoy \textit{free disposal} of the items. We refer to the tuple $(b_n^{*},v_n^{*})$ as the \textit{type} of buyer $n$. The type of a buyer is known only to him and constitutes his private information.

For each buyer $n$, the seller and the other buyers have imperfect information about his type; they describe the bundle that buyer $n$ is interested in by a random set $B_n$, and its value by a discrete random variable $X_n$. The random set~$B_n$ takes values from the collection $\setB_n \subseteq 2^{\setS}$, where $\setB_n$ is the collection of all bundles that buyer $n$ can possibly be interested in. The random variable $X_n$ is assumed to take values from the set $\mc{X}_n \triangleq \{x_n^1, x_n^2,\ldots,x_n^{K_n}\}$ of cardinality $K_n$, where $0 \leq x_n^1 < x_n^2 < \ldots < x_n^{K_n}$. The joint probability distribution of $B_n$ and $X_n$ is common knowledge. The probability that $B_n = b_n$ is denoted by $\pb{b_n}$, and conditioned on $B_n = b_n$, the probability that $X_n$ is equal to $x_n^i$ is denoted by $\pv{x_n^i}{b_n}$; i.e., $\pb{b_n} \triangleq \Prob{B_n = b_n}$, and $\pv{x_n^i}{b_n} \triangleq \Prob{X_n = x_n^i | B_n = b_n}$. Assume that $\pb{b_n} > 0$ and $\pv{x_n^i}{b_n} > 0$ for all $n \in \setN$, $b_n \in \setB_n$, and $1 \leq i \leq K_n$. Note that $(b_n^{*},v_n^{*})$ can be interpreted as a specific realization of the random variables $(B_n, X_n)$. Let $Y_n \triangleq (B_n, X_n)$ be the random vector describing the type of buyer $n$. Random vectors $[Y_n]_{n \in \setN}$ are assumed to be independent\footnote{This is referred to as the independent private value model. It is a fairly standard model in auction theory.}.

Denote a typical \textit{reported type} (henceforth referred to as a \textit{bid}) of a buyer $n$ by $(b_n,v_n)$, where $b_n \in \setB_n$ and $v_n \in \mc{X}_n$. Define the vector of bids as $(\mb{b},\mb{v})$, where $\mb{b} \triangleq (b_1, b_2, \ldots, b_N)$ is the vector of reported bundles, and $\mb{v} \triangleq (v_1, v_2, \ldots, v_N)$ is the vector of reported values. The seller can only allocate the items to a set of buyers whose reported bundles are disjoint. Given a vector of bundles~$\mb{b}$, define $\mc{A}(\mb{b})$ as follows:
\beq{feasible_allocations}
\mc{A}(\mb{b}) \triangleq \{A \subseteq \setN : ~ \forall n,m ~\in A, ~ n \neq m, ~ b_n \cap b_m = \emptyset \}. 
\eeq
This is the collection of all feasible allocations; i.e., the collection of all subsets of buyers who can be allocated their respective bundles simultaneously. Trivially, $\emptyset \in \mc{A}(\mb{b})$ and $\mc{A}(\mb{b})$ is \textit{downward closed}; i.e., if $A \in \mc{A}(\mb{b})$ and $B \subseteq A$, then $B \in \mc{A}(\mb{b})$.

Define $\mb{B} \triangleq (B_1,B_2,\ldots,B_N)$, $\mb{X} \triangleq (X_1, X_2, \ldots, X_N)$, and $\mb{Y} \triangleq (Y_1, Y_2, \ldots, Y_N)$. We use $\mb{Y}$ and $(\mb{B},\mb{X})$ interchangeably. Let $\bs{\setB} \triangleq \setB_1 \times \setB_2 \times \ldots \times \setB_N$ and $\bs{\setX} \triangleq \setX_1 \times \setX_2 \times \ldots \times \setX_N$. We use the standard game theoretic notation of $\mb{v}_{-n} \triangleq (v_1, \ldots, v_{n-1}, v_{n+1}, \ldots, v_N)$ and $\mb{v} \triangleq (v_n, \mb{v}_{-n})$. Similar interpretations are used for $\mb{b}_{-n}$, $\mb{B}_{-n}$, $\mb{X}_{-n}$, $\mb{Y}_{-n}$, $\bs{\setB}_{-n}$, and $\bs{\setX}_{-n}$. Henceforth, in any further usage, $b_n$, $\mb{b}_{-n}$, and $\mb{b}$ are always in the sets $\setB_n$, $\bs{\setB}_{-n}$, and $\bs{\setB}$ respectively; and $v_n$, $\mb{v}_{-n}$, and $\mb{v}$ are always in the sets $\setX_n$, $\bs{\setX}_{-n}$, and $\bs{\setX}$ respectively.

\section{Revenue Optimal Auction} \label{sec:opt-auction}
In this section, we formally describe the optimal auction problem, formulate the objective and the constraints explicitly, and provide an optimal algorithm for solving the problem. We will be focusing only on the auction mechanisms where buyers are asked to report their types directly (referred to as \textit{direct mechanism}). By the \textit{revelation principle}\footnote{Revelation principle says that, given a mechanism and a Bayes-Nash equilibrium (BNE) for that mechanism, there exists a direct mechanism in which truth-telling is a BNE, and allocation and payment outcomes are same as in the given BNE of the original mechanism.} \cite{Myerson81}, the restriction to direct mechanisms is without any loss of optimality.

\subsection{Characterization} \label{sec:characterization}
An auction mechanism is specified by an allocation rule $\bs{\pi}: \bs{\setB} \times \bs{\setX} \mapsto [0,1]^{2^N}$, and a payment rule $\mb{M}:\bs{\setB} \times \bs{\setX} \mapsto~\R^N$. Given a bid vector $(\mb{b},\mb{v})$, the allocation rule $\bs{\pi}(\mb{b},\mb{v}) \triangleq [\pi_A(\mb{b},\mb{v})]_{A \in 2^{\setN}}$ is a probability distribution over the power set $2^{\setN}$ of $\setN$. For each $A \in 2^{\setN}$, $\pi_A(\mb{b},\mb{v})$ is the probability that the set of buyers $A$ get their reported bundles simultaneously. The payment rule is defined as $\mb{M} \triangleq (M_1,M_2,\ldots,M_N)$, where $M_n(\mb{b},\mb{v})$ is the payment (expected payment in case of random allocation) that buyer~$n$ makes to the seller when the bid vector is $(\mb{b},\mb{v})$. Let $Q_n(\mb{b},\mb{v})$ be the probability that buyer $n$ gets his reported bundle $b_n$ when the bid vector is $(\mb{b},\mb{v})$; i.e,
\beq{winning-prob}
Q_n(\mb{b},\mb{v}) \triangleq \sum_{A \in 2^{\setN} : n \in A} \pi_A(\mb{b},\mb{v}).
\eeq

Given that the type of a buyer $n$ is $(b_n^{*},v_n^{*})$, and the bid vector is $(\mb{b},\mb{v})$, the payoff (expected payoff in case of random allocation) of the buyer $n$ is:
\beq{payoff}
\sigma_n(\mb{b},\mb{v}; b_n^{*},v_n^{*}) \triangleq Q_n(\mb{b},\mb{v})\indicator{b_n^{*} \subseteq b_n}v_n^{*} - M_n(\mb{b},\mb{v}).
\eeq
So buyers are assumed to be risk neutral and have quasilinear payoffs (a standard assumption in auction theory). The mechanism $(\bs{\pi},\mb{M})$ and the payoff functions $[\sigma_n]_{n \in \setN}$ induce a game of incomplete information among the buyers. We use Bayes-Nash equilibrium (BNE) as the solution concept. The seller's goal is to design an auction mechanism $(\bs{\pi},\mb{M})$ to maximize his expected revenue at a BNE of the induced game. Again, using the revelation principle, seller can restrict only to the auctions where truth-telling is a BNE (referred to as \textit{incentive compatibility}) without any loss of optimality.

For the above revenue maximization problem to be well defined, assume that the seller cannot force the buyers to participate in an auction and impose arbitrarily high payments on them. Thus, a buyer will voluntarily participate in an auction only if his payoff from participation is nonnegative (referred to as \textit{individual rationality}). In addition, the auction mechanism that the seller uses must always produce feasible allocations; i.e., for any bid vector, the set of winners must have disjoint bundles. The seller too is assumed to have free disposal of the items and may decide not to sell some or all items for certain bid vectors.

The idea now, as in \cite{Myerson81}, is to express incentive compatibility, individual rationality, and feasible allocations as mathematical constraints, and formulate the revenue maximization objective as an optimization problem under these constraints. To this end, for each $n \in \setN$, $b_n$, and $v_n$, define the following functions:
\beq{q}
q_n(b_n,v_n) \triangleq \E{Q_n(b_n,v_n,\mb{Y}_{-n})},
\eeq
\beq{m}
m_n(b_n,v_n) \triangleq \E{M_n(b_n,v_n,\mb{Y}_{-n})},
\eeq
Here, $q_n(b_n,v_n)$ is the expected probability that buyer $n$ gets his bundle given that he reports his type as $(b_n,v_n)$ while everyone else is truthful. The expectation here is over the type of everyone else; i.e., over $\mb{Y}_{-n}$. Similarly, $m_n(b_n,v_n)$ is the expected payment that buyer $n$ makes to the seller. The constraints can be expressed mathematically as follows
\begin{enumerate}
\item
\textit{Feasible allocation (FA)}: For any $\mb{b}$ and $\mb{v}$,
\beq{fa}
A \notin \setA(\mb{b}) \Rightarrow \pi_A(\mb{b},\mb{v}) = 0.
\eeq

\item
\textit{Incentive compatibility (IC)}: For any $n \in \setN$, $b_n$, $t \in \setB_n$, and $1 \leq i, j \leq K_n$, 
\beq{ic}
q_n(b_n,x_n^i)x_n^i - m_n(b_n,x_n^i) \geq q_n(t,x_n^j)\indicator{b_n \subseteq t}x_n^i - m_n(t,x_n^j).
\eeq
Notice that, given $B_n= b_n$, and $X_n = x_n^i$, the left side of \eqref{eq:ic} is the payoff of buyer $n$ from reporting his type truthfully, assuming everyone else is also truthful, while the right side is the payoff from misreporting his type to $(t,x_n^j)$. 

\item
\textit{Individual rationality (IR)}: For any $n \in \setN$, $b_n$, and $1 \leq i \leq K_n$,
\beq{ir}
q_n(b_n,x_n^i)x_n^i - m_n(b_n,x_n^i) \geq 0.
\end{equation}
\end{enumerate}

Under IC, all buyers report their true types. Hence, the expected revenue that the seller gets is $\E{\sum_{n=1}^{N}M_n(\mb{Y})}$. The expectation here is over the distribution of the random vector $\mb{Y}$. Thus, the seller's optimization problem is given by: \newline

\noindent \textbf{Optimal auction problem (OAP)}
\beq{oap}
\begin{array}{l}
\displaystyle \maximize_{\bs{\pi},\mb{M}} \quad \E{\sum_{n=1}^{N}M_n(\mb{Y})}, \\
\text{subject to FA, IC, and IR constraints.}
\end{array}
\eeq

Instead of solving the OAP, we solve a modified problem obtained by relaxing the IC constraint. We then find a sufficient condition under which the solution of the modified problem is also the solution of OAP. The \textit{relaxed IC constraint} is obtained by assuming that buyers report their bundles truthfully, or equivalently, the bundles that the buyers are interested in are known to everyone. Mathematically, the relaxed IC constraint is:
\beq{relaxed-ic}
q_n(b_n,x_n^i)x_n^i - m_n(b_n,x_n^i) \geq q_n(b_n,x_n^j)x_n^i - m_n(b_n,x_n^j),
\eeq
for any $n \in \setN$, $b_n$, and $1 \leq i, j \leq K_n$. The modified optimization problem is given by: \newline

\noindent \textbf{Modified optimal auction problem (MOAP)}
\beq{moap}
\begin{array}{l}
\displaystyle \maximize_{\bs{\pi},\mb{M}} \quad \E{\sum_{n=1}^{N}M_n(\mb{Y})}, \\
\text{subject to FA, relaxed IC \eqref{eq:relaxed-ic}, and IR constraints.}
\end{array}
\eeq

For the MOAP, using the relaxed IC constraint \eqref{eq:relaxed-ic} and the IR constraint \eqref{eq:ir}, we relate the expected payment $m_n$ of a buyer $n$ to his expected allocation probability $q_n$. The framework used is similar to \cite{Elkind07}. We first define the \textit{virtual-valuation} function, $w_n$, of buyer $n$ as:
\beq{virtual-bid}
w_n(b_n,x_n^i) \triangleq \left\{ 
\begin{array}{l l}
  \displaystyle x_n^i - (x_n^{i+1} - x_n^i)\frac{\left(\sum_{j=i+1}^{K_n}\pv{x_n^j}{b_n}\right)}{\pv{x_n^i}{b_n}} & \quad \text{if $1 \leq i\leq K_n -1$,}\\
  x_n^{K_n} & \quad \text{if $i = K_n$.}\\
\end{array} \right.
\eeq

\begin{definition} \label{definition:regularity}
The virtual-valuation function $w_n$ is said to be regular if $w_n(b_n,x_n^i) \leq w_n(b_n,x_n^{i+1})$ for all $b_n$, and $1 \leq i \leq K_n-1$.
\end{definition}

\begin{proposition}
\label{proposition:opt-revenue}
Let $\bs{\pi}$ be an allocation rule and $[Q_n]_{n \in \setN}$ and $[q_n]_{n \in \setN}$ be obtained from $\bs{\pi}$ by \eqref{eq:winning-prob} and \eqref{eq:q}. A payment rule satisfying the relaxed IC constraint \eqref{eq:relaxed-ic} and the IR constraint \eqref{eq:ir} exists for $\bs{\pi}$ if and only if $q_n(b_n,x_n^i) \leq q_n(b_n,x_n^{i+1})$ for all $n \in \setN$, $b_n$, and $1\leq i \leq K_n-1$. Given such $\bs{\pi}$ and a payment rule $\mb{M}$ satisfying the IC and IR constraints, the seller's revenue satisfies: 
\beq{revenue}
\text{Seller's revenue } = \E{\sum_{n=1}^{N}M_n(\mb{Y})} \leq \E{\sum_{n=1}^{N}Q_n(\mb{Y})w_n(Y_n)}.
\eeq
Moreover, a payment rule $\mb{M}$ achieving this bound exists and any such $\mb{M}$ satisfies: 
\beqn
m_n(b_n,x_n^i) = \sum_{j = 1}^{i}(q_n(b_n,x_n^j) - q_n(b_n,x_n^{j-1}))x_n^j,
\eeqn
for all $n \in \setN$, $b_n$, and $1\leq i \leq K_n$, where we use the notational convention $q_n(b_n,x_n^{0}) \triangleq 0$.
\end{proposition}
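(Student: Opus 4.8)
The plan is to fix a single buyer $n$ and a single reported bundle $b_n$ and reduce the statement to a one‑dimensional problem, since neither the relaxed IC constraint \eqref{eq:relaxed-ic} nor the IR constraint \eqref{eq:ir} couples distinct bundles, and across buyers everything factorizes through the independence of $[Y_n]_{n\in\setN}$ together with \eqref{eq:q} and \eqref{eq:m}. I would abbreviate $q_i\triangleq q_n(b_n,x_n^i)$, $m_i\triangleq m_n(b_n,x_n^i)$, $x_i\triangleq x_n^i$, $p_i\triangleq\pv{x_n^i}{b_n}>0$, with the paper's convention $q_0\triangleq 0$, and let $u_i\triangleq q_ix_i-m_i$ be the expected payoff of the truthful type $(b_n,x_n^i)$; then \eqref{eq:relaxed-ic} reads $u_i\ge u_j+q_j(x_i-x_j)$ for all $i,j$ and \eqref{eq:ir} reads $u_i\ge 0$. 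The argument is the discrete analogue of Myerson's, in the style of \cite{Elkind07}.

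For the monotonicity characterization, I would apply \eqref{eq:relaxed-ic} to the pair $(i,i+1)$ in both directions to get $q_i(x_{i+1}-x_i)\le u_{i+1}-u_i\le q_{i+1}(x_{i+1}-x_i)$; since $x_{i+1}>x_i$ this forces $q_i\le q_{i+1}$, which is the ``only if'' direction (IR is not needed). For ``if'', assume $q_i\le q_{i+1}$ for all $i$ and take the payment rule $m_i=\sum_{j=1}^{i}(q_j-q_{j-1})x_j$, realized by any $\mb M$ that depends on buyer $n$ only through his own report; an Abel summation shows this is equivalent to $u_i=\sum_{k=1}^{i-1}q_k(x_{k+1}-x_k)$. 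Then $u_i\ge 0$, so IR holds, and for $i>j$ one has $u_i-u_j=\sum_{k=j}^{i-1}q_k(x_{k+1}-x_k)\ge q_j(x_i-x_j)$ while for $i<j$ one has $u_i-u_j=-\sum_{k=i}^{j-1}q_k(x_{k+1}-x_k)\ge -q_j(x_j-x_i)$, using monotonicity of $q$ in each case; hence \eqref{eq:relaxed-ic} holds for all $i,j$.

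For the revenue bound, let $\mb M$ be any payment rule satisfying \eqref{eq:relaxed-ic} and \eqref{eq:ir} — in particular any $\mb M$ satisfying the full IC and IR constraints, since \eqref{eq:ic} with $t=b_n$ yields \eqref{eq:relaxed-ic}. Telescoping the adjacent inequalities $u_i-u_{i-1}\ge q_{i-1}(x_i-x_{i-1})$ down to $u_1\ge 0$ gives $u_i\ge\sum_{k=1}^{i-1}q_k(x_{k+1}-x_k)$, equivalently $m_i\le q_ix_i-\sum_{k=1}^{i-1}q_k(x_{k+1}-x_k)$. Taking the $p_i$‑weighted sum over $i$ and interchanging the order of summation in $\sum_i p_i\sum_{k=1}^{i-1}q_k(x_{k+1}-x_k)=\sum_{k=1}^{K_n-1}q_k(x_{k+1}-x_k)\sum_{j=k+1}^{K_n}p_j$, the coefficient of $q_i$ becomes $p_i\bigl(x_n^i-(x_n^{i+1}-x_n^i)(\sum_{j=i+1}^{K_n}p_j)/p_i\bigr)=p_i\,w_n(b_n,x_n^i)$ for $i\le K_n-1$ and $p_{K_n}x_n^{K_n}=p_{K_n}w_n(b_n,x_n^{K_n})$ for the top type; hence $\sum_i p_i m_i\le\sum_i p_i q_i w_n(b_n,x_n^i)$. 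Multiplying by $\pb{b_n}$, summing over $b_n$ and over $n$, and using \eqref{eq:q}, \eqref{eq:m} and independence of $[Y_n]_{n\in\setN}$ then gives \eqref{eq:revenue}.

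Finally, the ``moreover'' part: the payment rule constructed in the second paragraph makes every inequality above an equality, so it achieves \eqref{eq:revenue}. Conversely, \eqref{eq:revenue} was obtained by summing the nonnegative quantities $\pb{b_n}\,p_i\bigl(q_ix_i-\sum_{k=1}^{i-1}q_k(x_{k+1}-x_k)-m_i\bigr)$ over $n,b_n,i$ with strictly positive weights, so equality in \eqref{eq:revenue} forces each of them to vanish, i.e. $m_n(b_n,x_n^i)=q_n(b_n,x_n^i)x_n^i-\sum_{k=1}^{i-1}q_n(b_n,x_n^k)(x_n^{k+1}-x_n^k)=\sum_{j=1}^{i}(q_n(b_n,x_n^j)-q_n(b_n,x_n^{j-1}))x_n^j$, the last step by Abel summation. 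I expect the main obstacle to be the bookkeeping in the summation swap that makes $w_n$ appear exactly — keeping the top type $i=K_n$ separate and matching the coefficients to \eqref{eq:virtual-bid} — together with the (simple but easy to overlook) point that it is the term‑by‑term, positively weighted structure of the bound that upgrades a single scalar equality into the pointwise formula for $m_n$; verifying \eqref{eq:relaxed-ic} for the constructed rule for all pairs $(i,j)$, not merely adjacent ones, is the other place that needs care.
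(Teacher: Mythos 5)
Your proof is correct and follows essentially the same route as the paper's Appendix~A (pairwise adjacent relaxed-IC inequalities give monotonicity of $q_n$ and, via telescoping and Abel summation, the envelope bound on $m_n$; a summation swap then makes the virtual valuation $w_n$ appear as the coefficient of $q_n$, and the extremal payment rule is verified to satisfy relaxed IC and IR). One small thing you do that the paper leaves implicit is the ``any such $\mb{M}$'' uniqueness argument, which you correctly obtain by observing that the revenue gap is a sum of nonnegative terms with strictly positive weights $\pb{b_n}\,\pv{x_n^i}{b_n}$, forcing each term to vanish at equality.
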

\begin{proof}
The proof is given in Appendix \ref{sec:appendixA}.
\end{proof}

\subsection{Solution of the MOAP} \label{sec:moap-solution}
We now describe an algorithm for finding a solution of the MOAP. As mentioned in Section \ref{sec:introduction}, this is related to \cite{Hartline&Karlin07}, \cite{Ledyard07}, and \cite{Elkind07}.

From \eqref{eq:winning-prob}, for all $b_n$ and $v_n$, we have:
\beq{pi-q}
\sum_{n=1}^{N}Q_n(\mb{b},\mb{v})w_n(b_n,v_n) = \sum_{A \in 2^{\setN}}\pi_A(\mb{b},\mb{v})\bigg(\sum_{n \in A}w_n(b_n,v_n)\bigg).
\eeq
Proposition \ref{proposition:opt-revenue} and \eqref{eq:pi-q} suggest that a solution of the MOAP can be found by selecting the allocation rule $\bs{\pi}$ that assigns nonzero probabilities only to the set of buyers in $\mc{A}(\mb{b})$ with the maximum total virtual valuations for each bid vector $(\mb{b},\mb{v})$. If all $w_n$'s are regular, then it can be verified that such an allocation rule satisfies the monotonicity condition on the $q_n$'s needed by Proposition~\ref{proposition:opt-revenue}. However, if $w_n$'s are not regular, the resulting allocation rule would not necessarily satisfy the required monotonicity condition on the $q_n$'s. This problem can be remedied by using another function, $\overline{w}_n$, called the \textit{monotone virtual valuation} (henceforth MVV), constructed graphically as follows.

For all $n \in \setN$, $b_n$, and $0 \leq i \leq K_n$, define:
\beq{gh}
(g_n^{b_n,i},h_n^{b_n,i}) \triangleq \bigg(\sum_{j = 1}^{i} \pv{x_n^j}{b_n}, -x_n^{i+1}\big(\sum_{j = i+1}^{K_n} \pv{x_n^j}{b_n}\big)\bigg),
\eeq
where we use the notational convention of $\sum_{j=1}^{0}(.) \triangleq 0$, $x_n^{K_n+1} \triangleq 0$, and $\sum_{j=K_n+1}^{K_n}(.) \triangleq 0$. Then, $w_n(b_n,x_n^i)$ is given by the slope of the line joining the point $(g_n^{b_n,i-1},h_n^{b_n,i-1})$ to the point $(g_n^{b_n,i},h_n^{b_n,i})$; i.e., 
\beq{h-slope}
w_n(b_n,x_n^i) = \frac{h_n^{b_n,i}-h_n^{b_n,i-1}}{g_n^{b_n,i}-g_n^{b_n,i-1}}. 
\eeq
Find the lower convex hull of the points $[(g_n^{b_n,i},h_n^{b_n,i})]_{0 \leq i \leq K_n}$. Let $\overline{h}_n^{b_n,i}$ be the point on this convex hull corresponding to $g_n^{b_n,i}$. Then, $\overline{w}_n(b_n,x_n^i)$ is defined as the slope of the line joining the point $(g_n^{b_n,i-1},\overline{h}_n^{b_n,i-1})$ to the point $(g_n^{b_n,i},\overline{h}_n^{b_n,i})$;~i.e., 
\beq{h-slope-monotone}
\overline{w}_n(b_n,x_n^i) = \frac{\overline{h}_n^{b_n,i}-\overline{h}_n^{b_n,i-1}}{g_n^{b_n,i}-g_n^{b_n,i-1}}. 
\eeq
To find the points $[(g_n^{b_n,i},h_n^{b_n,i})]_{0 \leq i \leq K_n}$ graphically, draw vertical lines separated from each other by distances $\pv{x_n^1}{b_n}, \pv{x_n^2}{b_n}, \ldots, \pv{x_n^{K_n}}{b_n}$. For each $1 \leq i \leq K_n$, join the point $-x^i$ on the y-axis to the x-axis at $1$ (sum of probabilities) and call such line as line $i$. The intersection of line $1$ with y-axis is the point $(g_n^{b_n,0},h_n^{b_n,0})$. The intersection of line $2$ with the first vertical line is the point $(g_n^{b_n,1},h_n^{b_n,1})$. Similarly, the intersection of the line $3$ with the second vertical line is the point $(g_n^{b_n,2},h_n^{b_n,2})$, and so on. Notice that if $w_n$ is regular, $\overline{w}_n$ is equal to~$w_n$. 

Figure \ref{fig:virtual-bids} shows this construction for a typical random variable $X$ taking four different values $\{x^1,x^2,x^3,x^4\}$ with corresponding probabilities $\{p^1,p^2,p^3,p^4\}$, where we have dropped the subscripts corresponding to the buyers and the bundle information for the ease of notation. This is the case of virtual valuation being regular. Since the slopes of the graph are nondecreasing, the function graphed is convex. In Figure \ref{fig:monotone-bids}, $w(x^1) > w(x^2)$, and hence, the virtual-valuation function is not regular. Here, the the lower convex hull of the points $(g^i,h^i)$'s is taken. The slopes of individual segments of this convex hull give the MVV function $\overline{w}$. This is equivalent to replacing $w(x^1)$ and $w(x^2)$ by their weighted mean; i.e., $\overline{w}(x^1) = \overline{w}(x^2) = (p^1w(x^1) + p^2w(x^2))/(p^1 + p^2)$. 

\begin{figure}[htp]
\centering
\subfigure[Virtual valuations as the slopes of the graph.]{
\includegraphics[trim=0.8in 0.8in 0.8in 0.8in, clip=true, height=5.35in, angle=270]{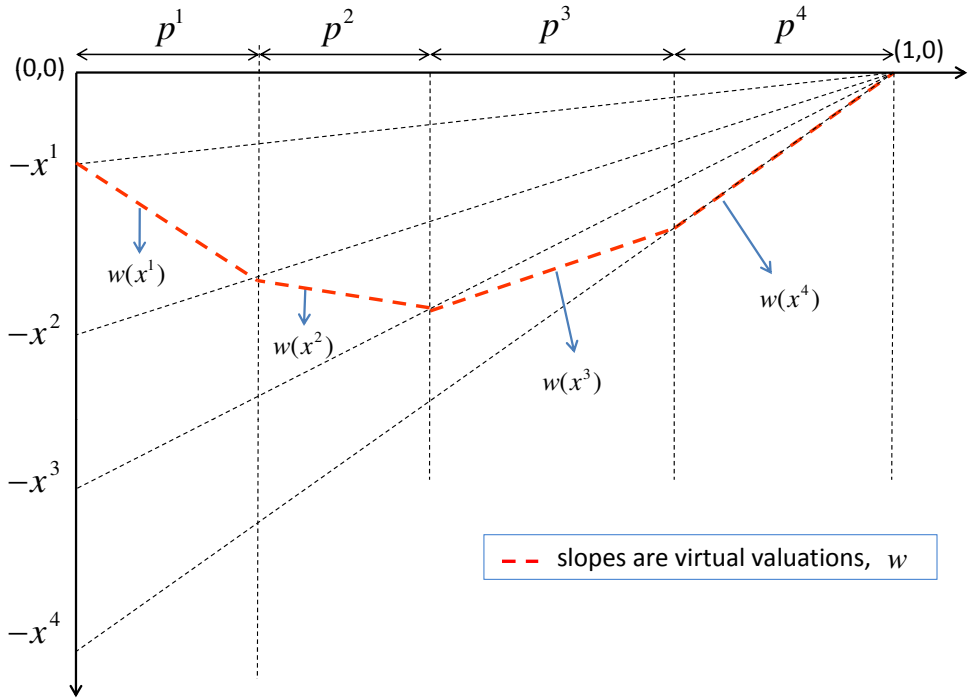}
\label{fig:virtual-bids}
}
\subfigure[Virtual valuations and MVVs as the slopes of the graph.]{
\includegraphics[trim=0.8in 0.8in 0.8in 0.8in, clip=true, height=5.35in, angle=270]{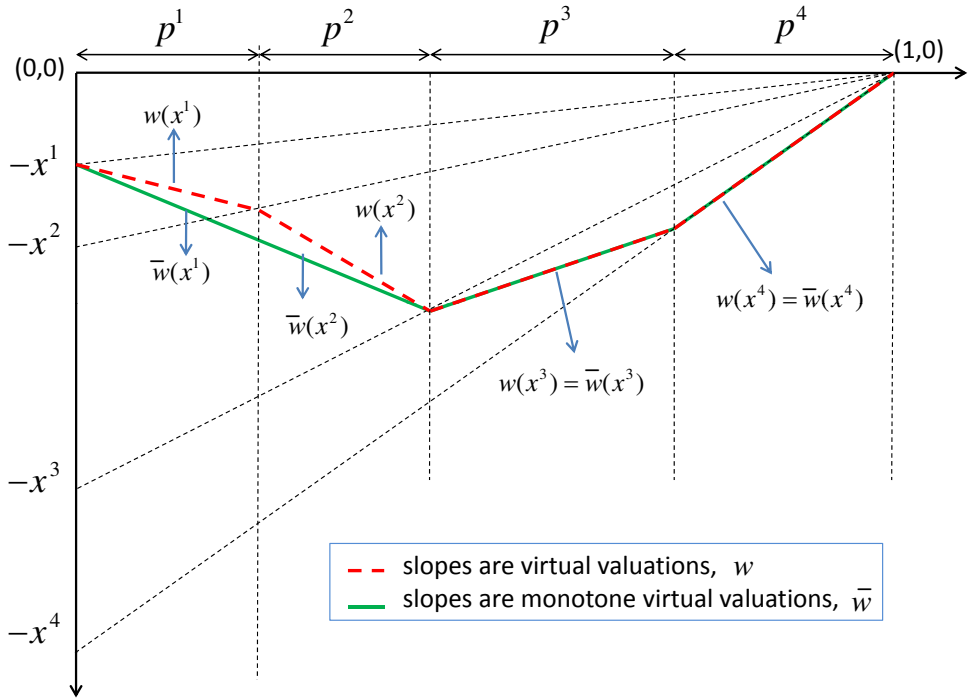}
\label{fig:monotone-bids}
}
\caption{Graphical construction of virtual valuations and MVVs.}
\label{fig:vv-mvvs}
\end{figure}

The following lemma that is a straightforward consequence of the construction of $\overline{w}_n$ as the slopes of a convex function:
\begin{lemma}
\label{lemma:monotone-vv-lemma}
$\overline{w}_n(b_n,x_n^i) \leq \overline{w}_n(b_n,x_n^{i+1})$ for all $n \in \setN$, $b_n$, and $1 \leq i \leq K_n-1$.
\end{lemma}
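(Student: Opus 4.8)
The plan is to read off that the sequence $i \mapsto \overline{w}_n(b_n,x_n^i)$ is precisely the sequence of successive chord slopes of a convex piecewise-linear function evaluated at a strictly increasing sequence of abscissae, and to invoke the elementary fact that such slopes are nondecreasing.

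First I would fix $n$ and $b_n$ and suppress them from the notation, writing $g^i = g_n^{b_n,i}$, $h^i = h_n^{b_n,i}$, and $\overline{h}^i = \overline{h}_n^{b_n,i}$. Since $\pv{x_n^j}{b_n} > 0$ for every $j$, the abscissae satisfy $g^0 = 0 < g^1 < \cdots < g^{K_n} = 1$ strictly, so none of the denominators $g^i - g^{i-1}$ appearing in \eqref{eq:h-slope-monotone} vanish. Next I would let $\phi:[0,1]\to\R$ denote the lower convex envelope of the points $\{(g^i,h^i)\}_{0 \le i \le K_n}$, i.e., the pointwise-largest convex function on $[0,1]$ with $\phi(g^i) \le h^i$ for all $i$; its graph is exactly the lower convex hull described just before the lemma, and by construction $\overline{h}^i = \phi(g^i)$. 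Consequently $\overline{w}_n(b_n,x_n^i) = (\phi(g^i) - \phi(g^{i-1}))/(g^i - g^{i-1})$ is the slope of the chord of the graph of $\phi$ joining the abscissae $g^{i-1}$ and $g^i$.

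The conclusion then follows from the standard three-chord (slopes) inequality for convex functions: for any convex $\phi$ and any $a < b < c$ in its domain, $\frac{\phi(b)-\phi(a)}{b-a} \le \frac{\phi(c)-\phi(b)}{c-b}$. Applying this with $a = g^{i-1}$, $b = g^i$, $c = g^{i+1}$ (legitimate since these are strictly increasing and all lie in $[0,1]$) gives $\overline{w}_n(b_n,x_n^i) \le \overline{w}_n(b_n,x_n^{i+1})$, which is the claimed inequality. There is no genuine obstacle here; the only points deserving a line of care are the identification $\overline{h}^i = \phi(g^i)$ and the strict monotonicity of the $g^i$, and both are immediate from the definition \eqref{eq:gh} and the positivity assumption on the conditional probabilities.

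If one prefers to avoid the convex-envelope abstraction, the same argument can be phrased purely through the graphical description used to define $\overline{w}_n$: the lower convex hull of the $(g^i,h^i)$ is a finite chain of line segments whose slopes are nondecreasing from left to right, and $\overline{w}_n(b_n,x_n^i)$ is the slope of the unique hull segment containing the sub-interval $[g^{i-1},g^i]$; hence for consecutive $i$ the two values are either the slope of a common segment (equal) or the slopes of two segments in their left-to-right order, so in either case $\overline{w}_n(b_n,x_n^i) \le \overline{w}_n(b_n,x_n^{i+1})$.
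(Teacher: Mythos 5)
Your proof is correct and takes essentially the same approach the paper intends: the paper simply remarks that the lemma is ``a straightforward consequence of the construction of $\overline{w}_n$ as the slopes of a convex function,'' and your argument spells this out explicitly via the three-chord slope inequality applied to the lower convex envelope of the points $(g_n^{b_n,i},h_n^{b_n,i})$, taking care to note that the strictly positive conditional probabilities make the $g_n^{b_n,i}$ strictly increasing.
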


The next proposition establishes the significance of the allocation rule obtained by using $\overline{w}_n$'s.

\begin{proposition} \label{proposition:monotone-vv-opt}
Given any allocation rule $\bs{\pi}$ such that $[Q_n]_{n \in \setN}$ and $[q_n]_{n \in \setN}$, obtained from $\bs{\pi}$ by \eqref{eq:winning-prob} and \eqref{eq:q}, satisfy $q_n(b_n,x_n^i) \leq q_n(b_n,x_n^{i+1})$, for all $n \in \setN$, $b_n$, and $1 \leq i \leq K_n-1$. Then, 
\beq{monotone_bid_ineq}
\E{\sum_{n=1}^{N}Q_n(\mb{Y})w_n(Y_n)} \leq \E{\sum_{n=1}^{N}Q_n(\mb{Y})\overline{w}_n(Y_n)}.
\eeq
Moreover, \eqref{eq:monotone_bid_ineq} holds with equality for any allocation rule that maximizes $\sum_{n=1}^{N}Q_n(\mb{b},\mb{v})\overline{w}_n(b_n,v_n)$ for each bid vector $(\mb{b},\mb{v})$, subject to the FA constraint.
\end{proposition}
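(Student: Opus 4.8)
### Proof proposal for Proposition~\ref{proposition:monotone-vv-opt}

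The plan is to reduce the vector inequality \eqref{eq:monotone_bid_ineq} to a collection of one-dimensional statements, one for each buyer $n$ and each reported bundle $b_n$, and then recognize each of those as an instance of a standard ``Abel summation against a convex minorant'' fact. Fix $n$ and $b_n$. For a fixed allocation rule $\bs{\pi}$, condition on $B_n = b_n$; the quantity $\E{Q_n(\mb{Y}) w_n(Y_n) \mid B_n = b_n}$ equals $\sum_{i=1}^{K_n} q_n(b_n,x_n^i)\, \pv{x_n^i}{b_n}\, w_n(b_n,x_n^i)$, and similarly with $\overline{w}_n$ in place of $w_n$. So, after multiplying by $\pb{b_n}$ and summing over $n$ and $b_n$, it suffices to show, for each fixed $n,b_n$,
\beqn
\sum_{i=1}^{K_n} q_n(b_n,x_n^i)\, \pv{x_n^i}{b_n}\, w_n(b_n,x_n^i) \;\leq\; \sum_{i=1}^{K_n} q_n(b_n,x_n^i)\, \pv{x_n^i}{b_n}\, \overline{w}_n(b_n,x_n^i),
\eeqn
whenever $q_n(b_n,x_n^1) \leq q_n(b_n,x_n^2) \leq \cdots \leq q_n(b_n,x_n^{K_n})$. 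From here I would drop subscripts, writing $q_i, p_i, w_i, \overline{w}_i$, and $g_i = \sum_{j\le i} p_j$, $h_i$, $\overline{h}_i$ as in \eqref{eq:gh}--\eqref{eq:h-slope-monotone}.

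The key step is to rewrite the sums using the graphical construction. By \eqref{eq:h-slope} and \eqref{eq:h-slope-monotone}, $p_i w_i = h_i - h_{i-1}$ and $p_i \overline{w}_i = \overline{h}_i - \overline{h}_{i-1}$, since $g_i - g_{i-1} = p_i$. Hence $\sum_i q_i p_i w_i = \sum_i q_i (h_i - h_{i-1})$, and Abel (summation-by-parts) gives
\beqn
\sum_{i=1}^{K} q_i (h_i - h_{i-1}) = q_K h_K - q_1 h_0 - \sum_{i=1}^{K-1}(q_{i+1}-q_i)\, h_i,
\eeqn
with the identical identity for $\overline{h}$. Now $h_0 = \overline{h}_0$ and $h_K = \overline{h}_K$ (the endpoints of the lower convex hull coincide with the original endpoints), so the boundary terms cancel when we take the difference, and the claimed inequality reduces to
\beqn
\sum_{i=1}^{K-1} (q_{i+1} - q_i)\, \overline{h}_i \;\leq\; \sum_{i=1}^{K-1} (q_{i+1} - q_i)\, h_i .
\eeqn
Each coefficient $q_{i+1} - q_i$ is nonnegative by the monotonicity hypothesis on $q_n$, and $\overline{h}_i \leq h_i$ for every $i$ because $\overline{h}_i$ lies on the lower convex hull of the points $(g_i,h_i)$ and therefore is never above $h_i$. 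This proves \eqref{eq:monotone_bid_ineq}.

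For the equality claim, observe that the inequality above is tight precisely when $q_{i+1} - q_i = 0$ for every index $i$ at which $\overline{h}_i < h_i$ strictly, i.e.\ at every point not on the convex hull. So I would argue: if $\bs{\pi}$ maximizes $\sum_n Q_n(\mb{b},\mb{v})\overline{w}_n(b_n,v_n)$ over $\mc{A}(\mb{b})$ for each bid vector, then (as already noted in the text, using Lemma~\ref{lemma:monotone-vv-lemma} — the $\overline{w}_n$ are nondecreasing) the induced $q_n$ are nondecreasing, so \eqref{eq:monotone_bid_ineq} applies; and moreover on any maximal ``flat'' stretch of the convex hull the index $i$ is off the hull, where $\overline{w}_n$ is constant, and a value-maximizing allocation rule gives the same winning probability to all types in that stretch — so $q_{i+1} = q_i$ exactly where $\overline{h}_i < h_i$, making the difference zero. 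The main obstacle is this last point: pinning down precisely why a maximizer of $\sum_n Q_n \overline{w}_n$ forces $q_n$ to be constant across each flat segment of $\overline{w}_n$. The clean way is to note that whether a buyer with bundle $b_n$ and a given value type wins depends, for a value-maximizing allocation, only on whether his $\overline{w}_n$-weight clears the competing threshold set by $\mb{Y}_{-n}$; since $\overline{w}_n$ takes the same value for all types $x_n^i$ in a flat segment, the win event — and hence $Q_n$, and hence $q_n$ after taking expectation over $\mb{Y}_{-n}$ — is identical for all such $i$. Ties (where the competing weight exactly equals $\overline{w}_n$) must be broken in a way that does not depend on which type within the flat segment the buyer has; any fixed tie-breaking rule that is a function of $(\mb{b},\mb{v}_{-n})$ only suffices, and one may simply fix such a rule.
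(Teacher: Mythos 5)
Your proof matches the paper's argument step for step: condition on $B_n = b_n$, rewrite $\pv{x_n^i}{b_n}\,w_n(b_n,x_n^i) = h_n^{b_n,i} - h_n^{b_n,i-1}$, apply Abel summation, and invoke $h_n^{b_n,0} = \overline{h}_n^{b_n,0}$, $h_n^{b_n,K_n} = \overline{h}_n^{b_n,K_n}$, $\overline{h}_n^{b_n,i} \leq h_n^{b_n,i}$ together with monotonicity of $q_n$; the equality claim is handled by the same flat-segment observation. Your explicit note that a maximizer of $\sum_n Q_n\overline{w}_n$ must be chosen consistently across a flat stretch (so that $q_n$ is constant there) is a fair refinement of a point the paper states somewhat tersely and addresses only via a consistency footnote in the proof of Proposition~\ref{proposition:MWA-optimality}.
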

\begin{proof}
The proof appears in Appendix \ref{sec:appendixB}. The proof is a straightforward extension of \cite{Elkind07} where a similar result is obtained for single item auctions.
\end{proof}

The maximum weight algorithm (henceforth MWA) for the MOAP is described in Algorithm~\ref{alg:mwa}. The set~$\mc{W}(\mb{b},\mb{v})$ is the collection of all feasible subsets of buyers with maximum total MVVs for the given bid vector $(\mb{b},\mb{v})$. Since $\setA(\mb{b})$ is downward closed and $\emptyset \in \setA(\mb{b})$, no buyer $n$ with $\overline{w}_n(b_n,v_n) < 0$ is included in the set of winners $W(\mb{b},\mb{v})$. In step $3$ of the MWA, for each $\text{$x_n^i \leq v_n$}$, $Q_n(\mb{b},x_n^i,\mb{v}_{-n})$ is computed recursively by treating $(\mb{b},x_n^i,\mb{v}_{-n})$ as the input bid vector and repeating steps $1-2$.

\begin{algorithm}
\floatname{algorithm}{Algorithm}
\caption{Maximum weight algorithm (MWA)} \label{alg:mwa}
Given a bid vector $(\mb{b},\mb{v})$:
\begin{enumerate}
\item
Compute $\overline{w}_n(b_n,v_n)$ for each $n \in \setN$.

\item
Take $\bs{\pi}(\mb{b},\mb{v})$ to be any probability distribution on the collection $\mc{W}(\mb{b},\mb{v})$ defined as:
\beqn 
\mc{W}(\mb{b},\mb{v}) \triangleq \argmax_{A \in \setA(\mb{b})} \sum_{n \in A}\overline{w}_n(b_n,v_n).
\eeqn
Obtain the set of winners $W(\mb{b},\mb{v})$ by sampling from $\mc{W}(\mb{b},\mb{v})$ according to $\bs{\pi}(\mb{b},\mb{v})$.

\item
Collect payments given by:
\beqn
M_n(\mb{b},\mb{v}) = \sum_{i:x_n^i \leq v_n}\big(Q_n(\mb{b},x_n^i,\mb{v}_{-n}) \\ - Q_n(\mb{b},x_n^{i-1},\mb{v}_{-n})\big)x_n^i, 
\eeqn
where $Q_n$ is given by \eqref{eq:winning-prob}, and $Q_n(\mb{b},x_n^0,\mb{v}_{-n}) \triangleq 0$. 
\end{enumerate}
\end{algorithm}

\begin{proposition} \label{proposition:MWA-optimality}
The MWA gives a solution of the MOAP.
\end{proposition}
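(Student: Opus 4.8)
The plan is to sandwich the optimal value of the MOAP between an upper bound obtained from Propositions~\ref{proposition:opt-revenue} and~\ref{proposition:monotone-vv-opt} and the revenue generated by the MWA. First I would establish the upper bound. Let $(\bs{\pi},\mb{M})$ be any mechanism feasible for the MOAP. By the ``only if'' direction of Proposition~\ref{proposition:opt-revenue}, the induced $q_n$'s satisfy $q_n(b_n,x_n^i)\le q_n(b_n,x_n^{i+1})$, so Proposition~\ref{proposition:monotone-vv-opt} applies, and combining it with the revenue bound of Proposition~\ref{proposition:opt-revenue} gives
\[
\E{\sum_{n=1}^N M_n(\mb{Y})}\ \le\ \E{\sum_{n=1}^N Q_n(\mb{Y})w_n(Y_n)}\ \le\ \E{\sum_{n=1}^N Q_n(\mb{Y})\overline{w}_n(Y_n)}.
\]
For each fixed bid vector $(\mb{b},\mb{v})$, \eqref{eq:pi-q} together with the FA constraint expresses $\sum_{n}Q_n(\mb{b},\mb{v})\overline{w}_n(b_n,v_n)$ as a convex combination of the numbers $\sum_{n\in A}\overline{w}_n(b_n,v_n)$ over $A\in\setA(\mb{b})$, so it is at most $\max_{A\in\setA(\mb{b})}\sum_{n\in A}\overline{w}_n(b_n,v_n)$; taking expectations yields
\[
\E{\sum_{n=1}^N M_n(\mb{Y})}\ \le\ \E{\max_{A\in\setA(\mb{B})}\sum_{n\in A}\overline{w}_n(Y_n)}\ =:\ V^{*},
\]
an upper bound valid for every MOAP-feasible mechanism.

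Next I would show the MWA is itself feasible for the MOAP and attains $V^{*}$. Its allocation rule $\bs{\pi}$ satisfies FA because $\mc{W}(\mb{b},\mb{v})\subseteq\setA(\mb{b})$. The crucial point is that the induced $q_n$'s are nondecreasing in $i$. Fix $b_n$ and $(\mb{b}_{-n},\mb{v}_{-n})$ and vary only buyer $n$'s reported value; since $\setA(\mb{b})$ does not depend on $v_n$, the maximum total MVV equals the larger of (i) the best feasible total MVV over sets excluding $n$ and (ii) $\overline{w}_n(b_n,v_n)$ plus the best total MVV of the remaining members over feasible sets containing $n$, and neither (i) nor the second summand of (ii) depends on $v_n$. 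Hence there is a threshold $\theta$, independent of $v_n$, such that $Q_n(\mb{b},x_n^i,\mb{v}_{-n})=0$ when $\overline{w}_n(b_n,x_n^i)<\theta$, equals $1$ when $\overline{w}_n(b_n,x_n^i)>\theta$, and lies in $[0,1]$ when $\overline{w}_n(b_n,x_n^i)=\theta$. Since $\overline{w}_n(b_n,x_n^i)$ is nondecreasing in $i$ by Lemma~\ref{lemma:monotone-vv-lemma}, and since two reports with the same MVV produce the same $Q_n$ (taking the tie-breaking distribution on $\mc{W}(\mb{b},\mb{v})$ to be a function of $\mb{b}$ and the profile of MVVs only), a short case check on the position of $\theta$ gives $Q_n(\mb{b},x_n^i,\mb{v}_{-n})\le Q_n(\mb{b},x_n^{i+1},\mb{v}_{-n})$; averaging over $\mb{Y}_{-n}$ yields $q_n(b_n,x_n^i)\le q_n(b_n,x_n^{i+1})$.

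With this monotonicity in hand, Proposition~\ref{proposition:opt-revenue} supplies a payment rule satisfying the relaxed IC and IR constraints that attains the bound $\E{\sum_n Q_n(\mb{Y})w_n(Y_n)}$, and every such rule has expected payments $m_n(b_n,x_n^i)=\sum_{j=1}^i(q_n(b_n,x_n^j)-q_n(b_n,x_n^{j-1}))x_n^j$. Taking expectations over $\mb{Y}_{-n}$ in step~3 of the MWA shows the MWA's payment rule realizes exactly these $m_n$; as the relaxed IC and IR constraints depend only on the pair $(q_n,m_n)$, the MWA's $(\bs{\pi},\mb{M})$ is therefore MOAP-feasible, with revenue $\E{\sum_n Q_n(\mb{Y})w_n(Y_n)}$. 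Because the MWA's allocation maximizes $\sum_n Q_n(\mb{b},\mb{v})\overline{w}_n(b_n,v_n)$ subject to FA for every bid vector, the equality clause of Proposition~\ref{proposition:monotone-vv-opt} makes this revenue equal to $\E{\sum_n Q_n(\mb{Y})\overline{w}_n(Y_n)}$, and pointwise optimality of $\bs{\pi}$ makes the latter equal to $V^{*}$; hence the MWA attains the upper bound and is optimal. I expect the verification that the MWA's $q_n$'s are nondecreasing --- in particular the bookkeeping around ties in the maximum-weight set, which forces the mild stipulation that tie-breaking be consistent across reports sharing the same MVV --- to be the main obstacle, everything else being assembly of the two preceding propositions.
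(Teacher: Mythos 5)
Your proof is correct and follows essentially the same route as the paper: establish FA, show the induced $q_n$'s are nondecreasing so that Proposition~\ref{proposition:opt-revenue} applies and the step-3 payments achieve its bound, then invoke both the inequality and the equality clause of Proposition~\ref{proposition:monotone-vv-opt} together with the pointwise maximization of $\sum_n Q_n\overline{w}_n$. Your explicit upper bound $V^*$ and the threshold argument for monotonicity are just more spelled-out versions of what the paper does more tersely, and you correctly identified the tie-breaking consistency requirement that the paper relegates to a footnote.
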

\begin{proof}
Let $(\bs{\pi}^o, \mb{M}^o)$ be the solution given by the MWA and let $[Q^o_n]_{n \in \setN}$ and $[q^o_n]_{n \in \setN}$ be obtained from $\bs{\pi}^o$ by \eqref{eq:winning-prob} and \eqref{eq:q}. Since $\mc{W}(\mb{b},\mb{v}) \subseteq \setA(\mb{b})$, $(\bs{\pi}^o, \mb{M}^o)$ satisfy the FA constraint. 

From Lemma \ref{lemma:monotone-vv-lemma}, $\overline{w}_n(b_n,x_n^i) \leq \overline{w}_n(b_n,x_n^{i+1})$. Hence, for any $(\mb{b}_{-n},\mb{v}_{-n})$, if $A \in \mc{W}(\mb{b},x_n^i,\mb{v}_{-n})$ and $n \in A$, then from step $2$ of the MWA\footnote{The allocation rule $\bs{\pi}^o$ must be consistent in the following sense: let $v_n$ and $\hat{v}_n$ be such that $v_n < \hat{v}_n$, but $\overline{w}_n(b_n,v_n) = \overline{w}_n(b_n,\hat{v}_n)$, then $\Prob{n \in W(\mb{b},v_n,\mb{v}_{-n})} \leq \Prob{n \in W(\mb{b},\hat{v}_n,\mb{v}_{-n})}$ for any $\mb{b}$ and $\mb{v}_{-n}$.}, $A \in \mc{W}(\mb{b},x_n^{i+1},\mb{v}_{-n})$. This in turn implies $Q^o_n(\mb{b},x_n^i,\mb{v}_{-n}) \leq Q^o_n(\mb{b},x_n^{i+1},\mb{v}_{-n})$ and $q_n^o(b_n,x_n^i) \leq q_n^o(b_n,x_n^{i+1})$. Thus, monotonicity condition of Proposition~\ref{proposition:opt-revenue} is satisfied and $\mb{M}^o$ is optimal given~$\bs{\pi}^o$.

Similar to \eqref{eq:winning-prob}, for all $b_n$ and $v_n$, we have:
\beqn
\sum_{n=1}^{N}Q_n^o(\mb{b},\mb{v})\overline{w}_n(b_n,v_n) = \sum_{A \in 2^{\setN}}\pi_A^o(\mb{b},\mb{v})\bigg(\sum_{n \in A}\overline{w}_n(b_n,v_n)\bigg).
\eeqn
Thus, $\bs{\pi}^o$ maximizes $\sum_{n=1}^{N}Q_n(\mb{b},\mb{v})\overline{w}_n(b_n,v_n)$ for each bid vector $(\mb{b},\mb{v})$, subject to the FA constraint.  Proposition \ref{proposition:monotone-vv-opt} then completes the proof.
\end{proof}

The MWA can be interpreted as follows. Given a bid vector $(\mb{b},\mb{v})$, construct a graph $\setG_{\mb{b}}(\setN, \setE)$ with a node $n$ for each buyer $n$, and an edge $e_{n,m} \in \setE$ if $b_n \cap b_m \neq \emptyset$. Thus, $\setG_{\mb{b}}$ is the conflict graph of the buyers, where an edge denotes that buyers corresponding to its endpoints cannot be allocated their bundles simultaneously. The collection of all independent sets of this graph is precisely $\setA(\mb{b})$. Let $\overline{w}_n(b_n,v_n)$ be the weight of node $n$. Then the set of winners $W(\mb{b},\mb{v})$ is a maximum weight independent set of this graph.

In the subsequent discussion, we will be using the MWA with a \textit{deterministic tie-breaking rule} (henceforth deterministic MWA). Here, in step $2$ of the MWA, the set of winners $W(\mb{b},\mb{v}) \in \mc{W}(\mb{b},\mb{v})$ is selected by a deterministic rule. For example, the set of winners $W(\mb{b},\mb{v})$ can be the first allocation set in $\mc{W}(\mb{b},\mb{v})$ under the lexicographic order defined over the set of all allocations $2^{\setN}$.

Let $(\bs{\pi}^o, \mb{M}^o)$ be the solution given by the deterministic MWA. Then $Q^o_n(\mb{b},\mb{v}) \in \{0,1\}$ for all $n \in \setN$, $\mb{b}$, and $\mb{v}$. Also, from the proof of Proposition \ref{proposition:MWA-optimality}, $Q^o_n(\mb{b},\mb{v})$ is nondecreasing in $v_n$, keeping~$\mb{b}$ and $\mb{v}_{-n}$ constant. This, along with the payment rule in step $3$ of MWA, implies that a winner pays the price that is the minimum value he needs to report to still win, keeping his bundle and the bids of everyone else fixed.

\subsection{Solution of the OAP} \label{sec:oap-solution}
We now give a sufficient condition under which a solution of MOAP is also the solution of OAP. To this end, define the \textit{hazard rate ordering} \cite{Shaked&Shanthikumar06} on two random variables as follows: 
\begin{definition} \label{assumption:hazard-rate-order}
A nonnegative random variable $Z_1$ is said to be smaller than a nonnegative random variable $Z_2$ under the hazard rate order, denoted by $Z_1 \leq_h Z_2$, if $Z_1$ and $Z_2$ have the same support, and
\beq{hazard-rate-eq}
\Prob{Z_1 > z | Z_1 > \hat{z}} \leq \Prob{Z_2 > z | Z_2 > \hat{z}},
\eeq
for all $z,\hat{z}$ such that $z \geq \hat{z}$, and $z,\hat{z}$ are in the common support of $Z_1$ and $Z_2$. 
\end{definition}
Notice that, if $Z_1 \leq_h Z_2$, then $Z_1$ is also smaller than $Z_2$ under the first order stochastic dominance (FOSD) \cite{Shaked&Shanthikumar06}. In the FOSD, \eqref{eq:hazard-rate-eq} is replaced by simply $\Prob{Z_1 > z} \leq \Prob{Z_2 > z}$ for all $z$ in the common support of $Z_1$ and $Z_2$. Hence, the hazard rate order is stricter than the FOSD.

It is natural to expect that if there are two bundles where one contains the other, then the larger bundle is likely to have a higher value. This is precisely captured by Assumption \ref{assumption:hazard-order} below.
\begin{assumption} \label{assumption:hazard-order}
For any $s,t \in \setB_n$ with $s \subseteq t$, the conditional random variable $(X_n|B_n=s)$ is smaller than the conditional random variable $(X_n|B_n=t)$ under the hazard rate order. Equivalently, for all $n \in \setN$, $s,t \in \setB_n$ such that $s \subseteq t$, and $1\leq j \leq i \leq K_n$,
\beq{hazard-rate-assumption}
\frac{\sum_{l=i}^{K_n}\pv{x_n^l}{s}}{\sum_{l=j}^{K_n}\pv{x_n^l}{s}} \leq \frac{\sum_{l=i}^{K_n}\pv{x_n^l}{t}}{\sum_{l=j}^{K_n}\pv{x_n^l}{t}}.
\eeq
\end{assumption}

Propositions \ref{proposition:mvv-order} and \ref{proposition:OAP-solution} below describe the main results of this paper. 

\begin{proposition} \label{proposition:mvv-order}
Let $s,t \in \setB_n$ be such that $s \subseteq t$. Then under Assumption \ref{assumption:hazard-order}, $\overline{w}_n(s,x_n^i) \geq \overline{w}_n(t,x_n^i)$ for $1 \leq i \leq K_n$.
\end{proposition}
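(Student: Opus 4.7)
The plan is to prove a pointwise inequality on the chord slopes of the graph defining the virtual valuations, and then lift it through a min-max characterization of the slopes of the lower convex hull.

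First, I would establish the standard min-max formula: for any $n$, $b_n$, and $1 \leq i \leq K_n$,
$$\overline{w}_n(b_n, x_n^i) = \max_{1 \leq j \leq i}\; \min_{i \leq k \leq K_n}\; \frac{h_n^{b_n,k}-h_n^{b_n,j-1}}{g_n^{b_n,k}-g_n^{b_n,j-1}}.$$
This is a classical fact about slopes of the greatest convex minorant of a finite point set. Letting $[j_i, k_i]$ denote the maximal pool around $i$ (so that $\overline{w}_n(b_n, x_n^i)$ equals the slope of the chord from $(g_n^{b_n,j_i-1}, h_n^{b_n,j_i-1})$ to $(g_n^{b_n,k_i}, h_n^{b_n,k_i})$), the pair $(j,k) = (j_i, k_i)$ attains this value; for any other $j \leq i$, convexity of the hull forces the chord ending at $k = k_i$ to have slope at most $\overline{w}_n(b_n, x_n^i)$, since $(g_n^{b_n,j-1}, h_n^{b_n,j-1})$ lies on or above the hull.

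Next, I would compute the chord slope explicitly. A direct substitution using \eqref{eq:gh} gives
$$\frac{h_n^{b_n,k}-h_n^{b_n,j-1}}{g_n^{b_n,k}-g_n^{b_n,j-1}} = x_n^j - (x_n^{k+1}-x_n^j)\,\frac{\sum_{l=k+1}^{K_n}\pv{x_n^l}{b_n}}{\sum_{l=j}^{k}\pv{x_n^l}{b_n}}.$$
For $1 \leq j \leq k \leq K_n - 1$ we have $x_n^{k+1} > x_n^j$, so the chord slope is larger precisely when the ratio $\Prob{X_n \geq x_n^{k+1} | B_n = b_n}/\Prob{X_n \geq x_n^j | B_n = b_n}$ is smaller (while for $k = K_n$, with the paper's convention $x_n^{K_n+1} \triangleq 0$, the chord equals $x_n^j$ regardless of $b_n$, so the inequality is trivial). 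For $s \subseteq t$, Assumption \ref{assumption:hazard-order} applied with indices $(k+1, j)$ reads
$$\frac{\sum_{l=k+1}^{K_n}\pv{x_n^l}{s}}{\sum_{l=j}^{K_n}\pv{x_n^l}{s}} \;\leq\; \frac{\sum_{l=k+1}^{K_n}\pv{x_n^l}{t}}{\sum_{l=j}^{K_n}\pv{x_n^l}{t}},$$
which is exactly the pointwise chord inequality: the chord slope at $(j, k)$ is at least as large when $b_n = s$ as when $b_n = t$.

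Finally, since $\max$ and $\min$ preserve pointwise inequalities, combining the two steps immediately yields $\overline{w}_n(s, x_n^i) \geq \overline{w}_n(t, x_n^i)$ for each $i$. The main technical obstacle I anticipate is the justification of the min-max formula: although geometrically transparent for piecewise-linear convex minorants, one must carefully argue both the achievability at $(j_i, k_i)$ and the upper bound for arbitrary $j \leq i$, making precise how the pooling intervals partition $\{1, \ldots, K_n\}$ and how the original points sit relative to the hull; everything else in the proof is an algebraic rewriting that reduces transparently to the hazard-rate inequality.
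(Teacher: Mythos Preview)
Your proposal is correct and takes a genuinely different route from the paper. The paper first recasts $\overline{w}_n(b_n,x_n^i)$ as
\[
\overline{w}_n(b_n,x_n^i)=\argmin_{c<x_n^i}\ \max_{z}\ \frac{(z-c)\bigl(1-F_{X_n|b_n}(z)\bigr)}{x_n^i-c},
\]
then proves that the ratio $\Phi_{X_n|s}(c)/\Phi_{X_n|t}(c)$ is nonincreasing in $c$ under the hazard-rate assumption (a short case analysis on the locations of the maximizers), and deduces the ordering of the argmins. Your argument instead invokes the classical max--min formula for the slopes of the greatest convex minorant (equivalently, the pool-adjacent-violators characterization of isotonic regression), computes the chord slopes in closed form, and checks that Assumption~\ref{assumption:hazard-order} gives a pointwise chord-slope inequality, which $\max$ and $\min$ then propagate. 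One minor point to make explicit in your write-up: the ratio appearing in your chord formula is $\sum_{l\geq k+1}/\sum_{j\leq l\leq k}$, not $\sum_{l\geq k+1}/\sum_{l\geq j}$, so you are implicitly using that $x\mapsto x/(1-x)$ is increasing to pass from \eqref{eq:hazard-rate-assumption} to the needed inequality; this is immediate but worth a sentence. Your route is shorter and more combinatorial once the max--min identity is granted, and it makes transparent exactly where the hazard-rate order enters; the paper's route is more geometric and self-contained in that it derives its variational characterization from the picture rather than citing isotonic-regression folklore. As you note, the only real work on your side is a clean proof of the max--min identity, which is standard but should be spelled out.
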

\begin{proof}
The proof is given in Appendix \ref{sec:appendixC}.
\end{proof}

\begin{proposition} \label{proposition:OAP-solution}
\textit{
A deterministic MWA gives a solution of the OAP under Assumption \ref{assumption:hazard-order}.
}
\end{proposition}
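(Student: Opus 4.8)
The strategy is to show that the deterministic MWA solution, already known to solve the MOAP, in fact satisfies the \emph{full} IC constraint \eqref{eq:ic}, and then observe that this suffices. Taking $t=b_n$ in \eqref{eq:ic} recovers the relaxed IC constraint \eqref{eq:relaxed-ic}, so every feasible point of the OAP is feasible for the MOAP; hence the OAP optimum is at most the MOAP optimum. If the deterministic MWA solution $(\bs{\pi}^o,\mb{M}^o)$ — which attains the MOAP optimum by Proposition \ref{proposition:MWA-optimality} and satisfies FA and IR — also satisfies full IC, then it is OAP-feasible and attains a value that upper-bounds the OAP optimum, so it is OAP-optimal.

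To check full IC, fix a buyer $n$, a true type $(b_n,x_n^i)$, and a candidate deviation $(t,x_n^j)$ with $t\in\setB_n$. If $b_n\not\subseteq t$, the right-hand side of \eqref{eq:ic} is $-m_n(t,x_n^j)\le 0$, since MWA payments are nonnegative (in step~3 the quantities $Q_n^o(\cdot,x_n^\ell,\cdot)$ are $\{0,1\}$-valued and nondecreasing in the own value, so the sum telescopes to a nonnegative threshold price), while the left-hand side is nonnegative by IR \eqref{eq:ir}. If $b_n\subseteq t$, applying the relaxed IC \eqref{eq:relaxed-ic} with the reported bundle $t$ gives $q_n(t,x_n^j)x_n^i-m_n(t,x_n^j)\le q_n(t,x_n^i)x_n^i-m_n(t,x_n^i)$, reducing matters to $q_n(b_n,x_n^i)x_n^i-m_n(b_n,x_n^i)\ge q_n(t,x_n^i)x_n^i-m_n(t,x_n^i)$. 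By the payment identity of Proposition \ref{proposition:opt-revenue} together with Abel summation, the truthful-value payoff for a reported bundle $b$ equals $\sum_{\ell=1}^{i-1}q_n(b,x_n^\ell)(x_n^{\ell+1}-x_n^\ell)$; since $x_n^{\ell+1}>x_n^\ell$, the needed inequality follows once we establish the monotonicity $q_n(b_n,v)\ge q_n(t,v)$ for all $v\in\setX_n$ whenever $b_n\subseteq t$.

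This monotonicity is the heart of the argument. Because $q_n(b,v)=\E{Q_n^o(b,v,\mb{Y}_{-n})}$ with $Q_n^o$ $\{0,1\}$-valued, it suffices to fix an arbitrary realization $\mb{y}_{-n}=(\mb{b}_{-n},\mb{v}_{-n})$ and show that if buyer $n$ wins under $(t,v)$ then he also wins under $(b_n,v)$. Compare the conflict graphs $G_b=\setG_{(b_n,\mb{b}_{-n})}$ and $G_t=\setG_{(t,\mb{b}_{-n})}$: since $b_n\subseteq t$ they share all edges not incident to $n$, while the edges at $n$ in $G_b$ form a subset of those in $G_t$, so $G_b$ is a subgraph of $G_t$; node $n$'s weight is $\overline{w}_n(b_n,v)\ge\overline{w}_n(t,v)$ in $G_b$ versus $G_t$ by Proposition \ref{proposition:mvv-order}, and all other node weights agree. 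Then (i) every independent set of $G_t$ is independent in $G_b$ with weakly larger weight, so passing from $G_t$ to $G_b$ the best weight of an independent set containing $n$ weakly increases; and (ii) the independent sets avoiding $n$, and their weights, are identical in the two graphs. Since the chosen maximum-weight independent set of $G_t$ contains $n$, the best $n$-containing weight is at least the best $n$-avoiding weight in $G_t$, hence also in $G_b$; so in $G_b$ either an $n$-containing set is strictly heaviest (and $n$ wins) or there is an exact tie in which all four relevant optima coincide, in which case one invokes a tie-breaking rule that is consistent with shrinking the reported bundle — analogous to the consistency with respect to the own value already used in the proof of Proposition \ref{proposition:MWA-optimality} — so that $n$ is still selected.

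I expect this last tie case to be the only real obstacle: one must commit to a deterministic tie-breaking rule that is simultaneously monotone in the own value and monotone under shrinking the reported bundle. A clean way to secure both is an infinitesimal perturbation of the node weights (for instance a lexicographic bias, or adding $\varepsilon\cdot(\text{size of reported bundle})^{-1}$) that removes ties while strictly favoring smaller bundles and larger own values; everything else — the reduction above and the graph-theoretic comparison — is routine bookkeeping.
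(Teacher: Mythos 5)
Your proof is correct and takes essentially the same route as the paper: reduce to verifying the full IC constraint, then establish it via Proposition~\ref{proposition:mvv-order} together with the observation that shrinking the reported bundle can only delete conflict-graph edges at node $n$ and weakly raise its weight; your Abel-summed interim payoff and the paper's critical-price argument encode the same monotonicity. You are right that the tie case needs care and that the paper leaves it implicit, but the lexicographic rule the paper already proposes has the consistency you want, so no perturbation is needed: any MWIS newly tied for optimal after shrinking $b_n$ must contain $n$ (either its weight strictly increased, which only happens if $n$ is in it, or it was not even independent before), while $n$-avoiding MWISs are unchanged, so the lexicographically first MWIS cannot switch to one avoiding $n$.
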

\begin{proof}
Notice that OAP and MOAP differ only in their constraints, and the relaxed IC constraint \eqref{eq:relaxed-ic} is a subset of the IC constraint \eqref{eq:ic}. Hence, we only need to verify that the solution given by the deterministic MWA satisfies the IC constraint. We show that, under the deterministic MWA, the truthful declaration of the types is a weakly dominant strategy for the buyers. 

Let the bid vector be $(\mb{b},\mb{v})$. Based on the reported bundles~$\mb{b}$, the conflict graph $\setG_{\mb{b}}$ is constructed. The weights of the nodes of $\setG_{\mb{b}}$ are the MVVs for the bid vector $(\mb{b},\mb{v})$. Consider a buyer~$n$. Let his true type be $(b_n^*, v_n^*)$. Since buyers are single minded, it can be assumed that $b_n \supseteq b_n^*$, otherwise the payoff from misreporting a bundle can be at most zero, which is less than or equal to the payoff from reporting the bundle truthfully. Also, if buyer $n$ does not get his bundle by bidding $(b_n,v_n)$ (and hence payoff equal to zero) then truthful bidding (payoff at least zero) cannot be worse. Hence, we only need to analyze the case where buyer~$n$ wins by bidding $(b_n, v_n)$ such that $b_n \supseteq b_n^*$. Since buyer $n$ is a winner, there is a maximum weight independent set (henceforth MWIS) in $\setG_{\mb{b}}$ that contains node~$n$. Because of a deterministic tie-breaking rule,~buyer $n$ pays the minimum value he needs to report to win. This is his value $x_n^i$ at which the value of the MWIS containing node~$n$ exceeds the value of all MWIS not containing node~$n$. Now, if instead buyer~$n$ reports $b_n^*$, it can result in deletion of some edges incident on node $n$ in $\setG_{\mb{b}}$, but cannot add any new edge. At the same time, from Proposition \ref{proposition:mvv-order}, the weight of node~$n$ (or the MVV of buyer~$n$) can possibly increase but cannot decrease. Hence, the value of the MWIS containing node $n$ can possibly go up but cannot decrease, while the value of the MWIS not containing node $n$ does not change. Buyer~$n$ still wins and the payment if he declares $(b_n^*,v_n)$ cannot be more than what he pays when he declares $(b_n,v_n)$. Thus, truthful reporting of the bundle is a weakly dominant strategy. 

We can now assume that buyer $n$ reports his bundle $b_n^*$ truthfully. Since the price that he pays only depends on his reported bundle and the bids of everyone else, but not on his reported value, truthful reporting of the value is a weakly dominant strategy. This completes the proof. 
\end{proof}

\section{Discussion} \label{sec:discussion}
In this section, we highlight some of the important properties of the optimal auction characterized in Section \ref{sec:opt-auction}.

\begin{enumerate}[(a)]
\item
\textbf{Some special cases}: A special case of interest is when the bundle that a buyer is interested in is known to everyone. This is equivalent to $\setB_n$ containing only one bundle. Here, OAP and MOAP are identical. More generally, if $\setB_n$ is such that no bundle in $\setB_n$ is a superset of another bundle in $\setB_n$, then a buyer will report his bundle truthfully, and hence OAP and MOAP are identical. If $B_n$ and $X_n$ are independent then Assumption \ref{assumption:hazard-rate-order} trivially holds true. In all these cases, an optimal auction is given by the deterministic MWA. 

Auctions with identical items can be thought of as single minded buyers with \textit{substitutes} and is closely related to the case with known bundles. Here, the seller has $\kappa$ identical items for sale. A buyer is interested in any one of the $\kappa$ items. The analysis of Section \ref{sec:opt-auction} is easily extended to this case by simply defining the collection of feasible allocations as:
\beqn 
\setA_{\kappa} \triangleq \{A: A \subseteq \setN, |A| \leq \kappa \}.
\eeqn
This is the collection of all subsets of $\setN$ with cardinality less than or equal to $\kappa$. The optimal auction is given by the MWA with $\setA(\mb{b})$ replaced by~$\setA_{\kappa}$.

\item
\textbf{Continuous OAP}: The results of this paper easily extend to the continuous version of the OAP where buyers' valuation sets are continuous. Here, $\setX_n$ is a nonnegative interval of $\R$, and the random variable $X_n$ is specified by the probability density function (pdf) $f_{X_n}(x) > 0$ for all $x \in \setX_n$. Let $F_{X_n}(x)$ be the corresponding cumulative distribution (CDF) function. Denote the conditional pdf and conditional CDF of $X_n$, given $B_n = b_n$, by $f_{X_n|b_n}(x)$ and $F_{X_n|b_n}(x)$ respectively. Then the MWA again gives the solution of the continuous OAP after the following modifications: 
\begin{align*}
w_n(b_n,v_n) &= v_n - \frac{1-F_{X_n|b_n}(v_n)}{f_{X_n|b_n}(v_n)}, \\
M_n(\mb{b},\mb{v}) &= Q_n(\mb{b},\mb{v})v_n - \int_0^{v_n}Q_n(\mb{b},x,\mb{v}_{-n})dx.
\end{align*}

The continuous OAP, however, has one key difference from the discrete OAP. In the continuous case, given an allocation rule $\bs{\pi}$, under the IC and relaxed IR constraints, the expected payment that a buyer makes is determined up to an additive constant (the \textit{revenue equivalence principle} of \cite{Myerson81}); i.e., $m_n(b_n,v_n) - m_n(b_n,0)$ is a known function of $v_n$. However, in the discrete case, $m_n(b_n,x_n^i) - m_n(b_n,0)$ is fixed to within an interval of values. The expected revenue is maximized by taking the upper value of this interval, and the optimal auction in Section \ref{sec:opt-auction} is characterized this way.

\item
\textbf{Reserve prices}: Given a bid vector $(\mb{b},\mb{v})$, no buyer $n$ with $\overline{w}_n(b_n,v_n) < 0$ is included in the set of winners, $W(\mb{b},\mb{v})$, by the MWA. Depending on the tie-breaking rule, a buyer $n$ with $\overline{w}_n(b_n,v_n) = 0$ may or may not be included in the set of winners. Assume that only buyers with $\overline{w}_n(b_n,v_n) > 0$ are considered. Since $\overline{w}_n(b_n,x_n^i) \leq \overline{w}_n(b_n,x_n^{i+1})$, the seller equivalently sets reserve prices for each buyer $n$ and does not sell any item to a buyer if his reported value is less than his reserve price. The reserve price for a buyer depends only on the probability distribution of his valuations, conditioned on his bundle, and might be different for different buyers. Let $r_n(b_n)$ be the reserve price for a buyer $n$ as a function of his bundle. If $x_n^k = r_n(b_n)$, then $\overline{w}_n(b_n,x_n^i) \leq 0$ for $1\leq i \leq k-1$. As $\overline{w}_n(b_n,x^i)$'s are the slopes of the lines joining the points $(g_n^{b_n,i},\overline{h}_n^{b_n,i})$, we get $\overline{h}_n^{b_n,k-1} = \min_{0 \leq i \leq K_n}\overline{h}_n^{b_n,i}$. From the property of convex hull, $\min_{0 \leq i \leq K_n}\overline{h}_n^{b_n,i} = \min_{0 \leq i \leq K_n} h_n^{b_n,i}$. Thus, using the definition of $h_n^{b_n,i}$'s, an equivalent formulation of the reserve price is:
\beq{reserve-price-eqiv}
r_n(b_n) = \max \bigg\{v_n: ~ v_n \in \argmax_{\hat{v}_n \in \setX_n} ~\hat{v}_n\Prob{X_n \geq \hat{v}_n|B_n = b_n}\bigg\}.
\eeq
Graphically, this corresponds to the y-intercept of the line through the lowermost point of the graph and the point $(1,0)$ ($x^3$ in Figures \ref{fig:virtual-bids} and~\ref{fig:monotone-bids}).

\item
\textbf{Implementation complexity}: The optimal allocation rule for auctions with single-minded buyers requires finding a maximum weight independent set in the conflict graph. This problem is NP-hard. However, similar to \cite{Lehmann02}, a greedy scheme can be obtained that is easy to implement, and achieves $\sqrt{S}$ approximation\footnote{Any approximation better than $\sqrt{S}$ is again NP-Hard.} of the revenue generated by the deterministic MWA. The greedy scheme allocates the bundles to the buyers according to the order induced by the normalized virtual valuations $w_n(b_n,v_n)/\sqrt{|b_n|}$. The price charged to a buyer who gets his desired bundle is the minimum value he needs to report to still win, keeping his bundle and the bids of everyone else fixed.

\item
\textbf{On the hazard rate order assumption}: In the absence of Assumption \ref{assumption:hazard-order}, the solution given by the MWA (under any tie-breaking rule) need not satisfy the IC constraint. The following example shows this. Consider two buyers $\{1,2\}$ and two items $\{A,B\}$. Buyer $1$ is interested in bundle $\{A\}$ and has value $\$1$ for it. Buyer $2$ can be interested in bundle $\{A\}$ or bundle $\{A,B\}$, each with probability $1/2$. Conditioned on buyer $2$ being interested in bundle $\{A\}$, his values can be $\$2$ or \$$4$, each with probability $1/2$. Conditioned on him being interested in bundle $\{A,B\}$, his values can be $\$2$ or $\$4$, with probabilities $0.9$ and $0.1$ respectively. Clearly, Assumption \ref{assumption:hazard-order} does not hold true for buyer $2$. The virtual-valuation function for buyer $1$ is $w_1(\{A\},\$1) = \$1$. For buyer $2$, the virtual-valuation function is $w_2(\{A\},\$2) = \$0$, $w_2(\{A\},\$4) = \$4$, $w_2(\{A,B\},\$2) = \$16/9$, and $w_2(\{A,B\},\$2) = \$4$. Under the MWA, if buyer $2$ bids $(\{A\},\$2)$ he loses, and if he bids $(\{A\},\$4)$ then he gets bundle $\{A\}$ at the price $\$4$. However, if buyer $2$ bids $(\{A,B\}, \$2)$ or $(\{A,B\}, \$4)$ then he gets bundle $\{A,B\}$ at the price $\$2$. Thus, if the true type of buyer $2$ is $(\{A\},\$4)$, he will misreport it to $(\{A,B\}, \$2)$ or $(\{A,B\}, \$4)$.
\end{enumerate}

\section{Conclusions} \label{sec:conclusions}
We characterized a Bayesian revenue optimal multiple items auction with single-minded buyers under a partial hazard rate order assumption on the conditional distribution of any buyer's valuation. This assumption is intuitive for single-minded buyers and imply that the larger bundle is likely to have higher value. The resulting auction has a simple structure - the set of winners are the maximum weight independent set of the conflict graph of the buyers, and the payment made by a winner is the minimum value he needs to report to win. Single-minded buyers have two dimensional private information. The contributions here provide a step towards understanding optimal auction problems where buyers' private information is multidimensional.

\appendix
\section{Proof of Proposition \ref{proposition:opt-revenue}} \label{sec:appendixA}
The proof of Proposition \ref{proposition:opt-revenue} follows from the lemmas given below.

\begin{lemma}
\label{lemma:monotone_lemma}
Under the relaxed IC constraint \eqref{eq:relaxed-ic}, $q_n(b_n,x_n^i) \leq q_n(b_n,x_n^{i+1})$ for all $n \in \setN$, $b_n$, and $1 \leq i \leq K_n-1$.
\end{lemma}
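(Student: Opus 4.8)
The plan is to run the standard ``two-deviation'' monotonicity argument from mechanism design, specialized to adjacent valuations $x_n^i$ and $x_n^{i+1}$. Fix $n \in \setN$, $b_n$, and $1 \leq i \leq K_n - 1$. The relaxed IC constraint \eqref{eq:relaxed-ic} will be invoked twice: once with the true value being $x_n^i$ and the reported value being $x_n^{i+1}$, and once with the true value being $x_n^{i+1}$ and the reported value being $x_n^i$. These give, respectively,
\beqn
q_n(b_n,x_n^i)x_n^i - m_n(b_n,x_n^i) \geq q_n(b_n,x_n^{i+1})x_n^i - m_n(b_n,x_n^{i+1}),
\eeqn
\beqn
q_n(b_n,x_n^{i+1})x_n^{i+1} - m_n(b_n,x_n^{i+1}) \geq q_n(b_n,x_n^i)x_n^{i+1} - m_n(b_n,x_n^i).
\eeqn

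Next I would add these two inequalities. The payment terms $m_n(b_n,x_n^i)$ and $m_n(b_n,x_n^{i+1})$ cancel, leaving
\beqn
q_n(b_n,x_n^i)x_n^i + q_n(b_n,x_n^{i+1})x_n^{i+1} \geq q_n(b_n,x_n^{i+1})x_n^i + q_n(b_n,x_n^i)x_n^{i+1},
\eeqn
which rearranges to
\beqn
\big(q_n(b_n,x_n^i) - q_n(b_n,x_n^{i+1})\big)\big(x_n^i - x_n^{i+1}\big) \geq 0.
\eeqn

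Finally, since $x_n^i < x_n^{i+1}$ by the ordering convention on $\setX_n$, the factor $x_n^i - x_n^{i+1}$ is strictly negative, so dividing through (and flipping the inequality) yields $q_n(b_n,x_n^i) \leq q_n(b_n,x_n^{i+1})$, as claimed. There is no real obstacle here: the only points requiring a word of care are that both deviations being compared keep the bundle fixed at $b_n$ (so the indicator $\indicator{b_n \subseteq t}$ in the full IC constraint is irrelevant and \eqref{eq:relaxed-ic} applies verbatim), and that the inequality reverses when dividing by the negative quantity $x_n^i - x_n^{i+1}$.
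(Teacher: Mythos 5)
Your proof is correct and is exactly the argument the paper sketches: the paper's proof likewise instructs the reader to apply \eqref{eq:relaxed-ic} twice with the roles of $x_n^i$ and $x_n^{i+1}$ swapped and combine, and you have simply carried out the algebra explicitly.
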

\begin{proof}
The proof follows easily from \eqref{eq:relaxed-ic} by considering the case where the true value of the bundle $b_n$ for buyer $n$ is $x_n^i$ but he reports $x_n^{i+1}$ instead, and the case where the true value is $x_n^{i+1}$ but he reports $x_n^i$ instead.
\end{proof}

\begin{lemma}
\label{lemma:ic_reduction_lemma}
The relaxed IC constraint \eqref{eq:relaxed-ic} is equivalent to:
\beq{ic_reduction}
\big(q_n(b_n,x_n^{i+1}) - q_n(b_n,x_n^i)\big)x_n^i \leq m_n(b_n,x_n^{i+1}) - m_n(b_n,x_n^i)
\leq \big(q_n(b_n,x_n^{i+1}) - q_n(b_n,x_n^i)\big)x_n^{i+1}, 
\eeq
for all $n \in \setN$, $b_n$, and $1 \leq i \leq K_n-1$.
\end{lemma}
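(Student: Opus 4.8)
The plan is to establish the equivalence in the two directions separately, each by an elementary manipulation of the inequalities; nothing beyond what precedes the lemma is needed.

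For the direction \eqref{eq:relaxed-ic} $\Rightarrow$ \eqref{eq:ic_reduction}, I would simply instantiate \eqref{eq:relaxed-ic} at the two adjacent pairs of values. Taking the true value to be $x_n^i$ and the misreport to be $x_n^{i+1}$ and rearranging gives the left inequality of \eqref{eq:ic_reduction}; taking the true value to be $x_n^{i+1}$ and the misreport to be $x_n^i$ and rearranging gives the right inequality. This finishes the direction with no extra work.

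For the direction \eqref{eq:ic_reduction} $\Rightarrow$ \eqref{eq:relaxed-ic}, I would first observe that combining the two inequalities in \eqref{eq:ic_reduction} yields $(q_n(b_n,x_n^{i+1}) - q_n(b_n,x_n^i))(x_n^{i+1} - x_n^i) \geq 0$, so $q_n(b_n,\cdot)$ is nondecreasing in its value argument (equivalently, Lemma \ref{lemma:monotone_lemma}). Then, fixing $n$, $b_n$, and the index $i$ of the true value, I would verify \eqref{eq:relaxed-ic} for an arbitrary index $j$ by a telescoping argument. If $j > i$, write $m_n(b_n,x_n^j) - m_n(b_n,x_n^i) = \sum_{k=i}^{j-1}\big(m_n(b_n,x_n^{k+1}) - m_n(b_n,x_n^k)\big)$, lower-bound each summand by $\big(q_n(b_n,x_n^{k+1}) - q_n(b_n,x_n^k)\big)x_n^k$ using the left inequality of \eqref{eq:ic_reduction}, then replace each $x_n^k$ by the smaller-or-equal $x_n^i$ — valid because the coefficient $q_n(b_n,x_n^{k+1}) - q_n(b_n,x_n^k)$ is nonnegative by monotonicity — and note the resulting sum telescopes to $\big(q_n(b_n,x_n^j) - q_n(b_n,x_n^i)\big)x_n^i$, which rearranges to exactly \eqref{eq:relaxed-ic}. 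The case $j < i$ is symmetric: expand $m_n(b_n,x_n^i) - m_n(b_n,x_n^j)$ as a telescoping sum, upper-bound each summand using the right inequality of \eqref{eq:ic_reduction}, replace each $x_n^{k+1}$ by the larger-or-equal $x_n^i$, and telescope. The case $j = i$ is trivial.

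I do not anticipate any genuine obstacle; the statement is a discrete analogue of the standard characterization of incentive-compatible payment rules. The only points requiring care are keeping the two forms of \eqref{eq:relaxed-ic} (true value above vs.\ below the misreport) straight when reading off \eqref{eq:ic_reduction}, and making sure the monotonicity of $q_n$ is derived before it is invoked in the telescoping bound.
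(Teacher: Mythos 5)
Your proof is correct and follows essentially the same telescoping argument as the paper. The only (welcome) refinement is that you derive the monotonicity of $q_n$ directly from \eqref{eq:ic_reduction} rather than citing Lemma~\ref{lemma:monotone_lemma}, which is cleaner since that lemma is formally stated under \eqref{eq:relaxed-ic}, the conclusion being established in this direction.
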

\begin{proof}
Trivially, the relaxed IC constraint \eqref{eq:relaxed-ic} implies \eqref{eq:ic_reduction}. To show that \eqref{eq:ic_reduction} implies \eqref{eq:ic}, first consider the case $j > i$.  Using \eqref{eq:ic_reduction}, 
\begin{align*}
\big(q_n(b_n,x_n^j) - q_n(b_n,x_n^i)\big)x_n^i & \leq \sum_{k=i}^{j-1}\left[\big(q_n(b_n,x_n^{k+1}) - q_n(b_n,x_n^k)\big)x_n^k \right], \\
& \leq \sum_{k = i}^{j-1}\left[m_n(b_n,x_n^{k+1}) - m_n(b_n,x_n^k) \right], \\ 
& = m_n(b_n,x_n^j) - m_n(b_n,x_n^i),
\end{align*}
where the first inequality follows from Lemma \ref{lemma:monotone_lemma} and $x_n^k < x_n^{k+1}$. Thus, \eqref{eq:relaxed-ic} holds for $j < i$. Similarly, starting with the left inequality of \eqref{eq:ic_reduction}, it can easily be shown that \eqref{eq:ic_reduction} implies \eqref{eq:relaxed-ic} for $ j < i$, and the proof is complete.
\end{proof}

\begin{lemma}
\label{lemma:characterizing_lemma}
Under the relaxed IC constraint \eqref{eq:relaxed-ic} and the IR constraint \eqref{eq:ir}, for all $n \in \setN$ and $b_n$, the following holds:
\beq{paymenb_upper_bound}
\E{m_n(b_n,X_n)|B_n = b_n} \leq \E{q_n(b_n,X_n)w_n(b_n,X_n)|B_n = b_n}.
\eeq
Moreover, \eqref{eq:paymenb_upper_bound} holds with equality for $m_n(b_n,x_n^i)$ satisfying:
\beq{m-equality}
m_n(b_n,x_n^i) = \sum_{j = 1}^{i}\left[\big(q_n(b_n,x_n^j) - q_n(b_n,x_n^{j-1})\big)x_n^j\right],
\eeq
where we use the notational convention $q_n(b_n,x_n^{0}) \triangleq 0$.
\end{lemma}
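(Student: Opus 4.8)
The plan is to obtain a pointwise upper bound on each payment $m_n(b_n,x_n^i)$, sum these bounds against the conditional distribution $\pv{\cdot}{b_n}$, and then use a discrete integration by parts (Abel summation) to identify the resulting expression with $\E{q_n(b_n,X_n)w_n(b_n,X_n)|B_n=b_n}$; the equality claim will fall out along the way.

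First I would invoke Lemma \ref{lemma:ic_reduction_lemma}, whose right-hand inequality gives $m_n(b_n,x_n^{j+1}) - m_n(b_n,x_n^j) \leq \big(q_n(b_n,x_n^{j+1}) - q_n(b_n,x_n^j)\big)x_n^{j+1}$ for each $j$. Telescoping this from $j=1$ to $i-1$ and combining with the IR constraint \eqref{eq:ir} at the lowest value, namely $m_n(b_n,x_n^1) \leq q_n(b_n,x_n^1)x_n^1$, yields, after reindexing and using the convention $q_n(b_n,x_n^0) \triangleq 0$, exactly $m_n(b_n,x_n^i) \leq \sum_{j=1}^{i}\big(q_n(b_n,x_n^j) - q_n(b_n,x_n^{j-1})\big)x_n^j$. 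This also settles the equality claim: if $m_n$ is chosen to satisfy \eqref{eq:m-equality}, then every inequality just used holds with equality, hence so does its $\pv{\cdot}{b_n}$-weighted sum.

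Next I would take conditional expectations, writing $\E{m_n(b_n,X_n)|B_n=b_n} = \sum_{i=1}^{K_n}\pv{x_n^i}{b_n}\,m_n(b_n,x_n^i)$, which by the pointwise bound is at most $\sum_{i=1}^{K_n}\pv{x_n^i}{b_n}\sum_{j=1}^{i}\big(q_n(b_n,x_n^j) - q_n(b_n,x_n^{j-1})\big)x_n^j$. Swapping the order of summation collects the coefficient of $\big(q_n(b_n,x_n^j) - q_n(b_n,x_n^{j-1})\big)x_n^j$ into $\sum_{i \geq j}\pv{x_n^i}{b_n} = \Prob{X_n \geq x_n^j \mid B_n = b_n}$. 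I would then apply Abel summation to $\sum_{j}\big(q_n(b_n,x_n^j) - q_n(b_n,x_n^{j-1})\big)\,x_n^j\,\Prob{X_n \geq x_n^j \mid B_n = b_n}$: peeling off the top term $q_n(b_n,x_n^{K_n})x_n^{K_n}\pv{x_n^{K_n}}{b_n}$ and rewriting each difference $x_n^j\Prob{X_n \geq x_n^j \mid B_n = b_n} - x_n^{j+1}\Prob{X_n \geq x_n^{j+1} \mid B_n = b_n}$, and using $\Prob{X_n \geq x_n^j \mid B_n = b_n} = \Prob{X_n \geq x_n^{j+1} \mid B_n = b_n} + \pv{x_n^j}{b_n}$, shows this difference equals $\pv{x_n^j}{b_n}\,w_n(b_n,x_n^j)$ by the definition \eqref{eq:virtual-bid} of $w_n$ (the boundary term at $j=K_n$ matching $w_n(b_n,x_n^{K_n}) = x_n^{K_n}$). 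Reassembling gives $\sum_{j=1}^{K_n}\pv{x_n^j}{b_n}\,q_n(b_n,x_n^j)\,w_n(b_n,x_n^j) = \E{q_n(b_n,X_n)w_n(b_n,X_n)|B_n=b_n}$, which is \eqref{eq:paymenb_upper_bound}.

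The main obstacle is purely bookkeeping: getting the Abel summation and the reindexing exactly right so that the telescoped differences of $x_n^j\Prob{X_n \geq x_n^j \mid B_n = b_n}$ collapse onto $\pv{x_n^j}{b_n}\,w_n(b_n,x_n^j)$, with the $j=K_n$ boundary term and the $q_n(b_n,x_n^0) \triangleq 0$ convention handled carefully. Beyond this discrete integration by parts, nothing deeper is involved.
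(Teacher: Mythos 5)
Your argument follows essentially the same route as the paper: telescope Lemma~\ref{lemma:ic_reduction_lemma} together with the IR constraint at $x_n^1$ to get the pointwise bound $m_n(b_n,x_n^i) \leq \sum_{j=1}^{i}\big(q_n(b_n,x_n^j) - q_n(b_n,x_n^{j-1})\big)x_n^j$, take the conditional expectation, swap the order of summation to produce the coefficient $\sum_{i\geq j}\pv{x_n^i}{b_n}$, and then rearrange to identify the result with $\E{q_n(b_n,X_n)w_n(b_n,X_n)\mid B_n=b_n}$. The paper compresses this last rearrangement into the phrase ``rearranging the terms and using~\eqref{eq:virtual-bid}''; you make the Abel summation explicit, which is a reasonable way to spell it out, and your bookkeeping (the boundary term at $j=K_n$ matching $w_n(b_n,x_n^{K_n})=x_n^{K_n}$, the convention $q_n(b_n,x_n^0)=0$) is correct.

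One thing you omit that the paper treats as a necessary ``final step'': you establish that the particular $m_n$ in~\eqref{eq:m-equality} turns the upper bound into an equality, but you do not check that this $m_n$ actually satisfies the relaxed IC constraint~\eqref{eq:relaxed-ic} and the IR constraint~\eqref{eq:ir}. Without that, the ``Moreover'' clause only says equality would hold \emph{if} such a payment rule were feasible, and Proposition~\ref{proposition:opt-revenue}'s claim that a bound-achieving payment rule exists would be unsupported. The verification is short --- relaxed IC follows from Lemma~\ref{lemma:ic_reduction_lemma} since the increments of the prescribed $m_n$ sit exactly at the right endpoint of the allowed interval, and IR follows by bounding $x_n^j \leq x_n^i$ inside the sum to get $m_n(b_n,x_n^i) \leq q_n(b_n,x_n^i)x_n^i$ --- but it should be stated.
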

\begin{proof}
Lemma \ref{lemma:ic_reduction_lemma} and the IR constraint $m_n(b_n,x_n^1) \leq q_n(b_n,x_n^1)x_n^1$ easily imply:
\beq{ic_reduction_lemma_eq1}
m_n(b_n,x_n^i) \leq \sum_{j=1}^{i}\left[\big(q_n(b_n,x_n^j) - q_n(b_n,x_n^{j-1})\big)x_n^j\right],
\eeq
where $q_n(b_n,x_n^{0}) \triangleq 0$. Using \eqref{eq:ic_reduction_lemma_eq1}, 
\begin{align*}
\E{m_n(b_n,X_n)|B_n = b_n} & = \sum_{i=1}^{K_n} \pv{x_n^i}{b_n} m_n(b_n,x_n^i), \\ 
& \leq \sum_{i=1}^{K_n}\sum_{j=1}^{i}\left[\big(q_n(b_n,x_n^j) - q_n(b_n,x_n^{j-1})\big)x_n^j \pv{x_n^i}{b_n}\right], \\
& = \sum_{j=1}^{K_n}\sum_{i=j}^{K_n}\left[\big(q_n(b_n,x_n^j) - q_n(b_n,x_n^{j-1})\big)x_n^j \pv{x_n^i}{b_n}\right], \\
& = \sum_{j=1}^{K_n}\left[\big(q_n(b_n,x_n^j) - q_n(b_n,x_n^{j-1})\big)\left(\sum_{i=j}^{K_n} \pv{x_n^i}{b_n} \right)x_n^j\right], \\
& = \sum_{j=1}^{K_n} \pv{x_n^j}{b_n} q_n(b_n,x_n^j) w_n(b_n,x_n^j), \\ 
& = \E{q_n(b_n,X_n)w_n(b_n,X_n)|B_n = b_n}.
\end{align*}
where the second last equality is obtained by rearranging the terms and using \eqref{eq:virtual-bid}. It is straightforward to verify that the above holds with equality for $m_n(b_n,x_n^i)$ given by \eqref{eq:m-equality}. The final step is show that this particular choice of $m_n$ satisfies the relaxed IC and the IR constraints. The relaxed IC constraint is trivially satisfied using Lemma~\ref{lemma:ic_reduction_lemma}. The IR constraint is satisfied since:
\begin{align*}
m_n(b_n,x_n^i) & = \sum_{j = 1}^{i}(q_n(b_n,x_n^j) - q_n(b_n,x_n^{j-1}))x_n^j, \\ 
& \leq \sum_{j = 1}^{i}\big(q_n(b_n,x_n^j) - q_n(b_n,x_n^{j-1})\big)x_n^i = q_n(b_n,x_n^i)x_n^i.
\end{align*}
\end{proof}
Notice that the last part of the proof of Lemma \ref{lemma:characterizing_lemma} shows that the condition $q_n(b_n,x_n^i) \leq q_n(b_n,x_n^{i+1})$ for $1 \leq i \leq K_n-1$, is also sufficient for the existence of a payment rule satisfying the relaxed IC and the IR constraint. Proposition \ref{proposition:opt-revenue} easily follows by combining Lemmas \ref{lemma:monotone_lemma}-\ref{lemma:characterizing_lemma}, and noticing that:
\beqn
\E{M_n(\mb{Y})} = \E{m_n(Y_n)} = \E{\E{m_n(B_n,X_n)|B_n}} = \sum_{b_n \in \setB_n}\pb{b_n}\E{m_n(b_n,X_n)|B_n = b_n}.
\eeqn

\section{Proof of Proposition \ref{proposition:monotone-vv-opt}} \label{sec:appendixB}
To prove \eqref{eq:monotone_bid_ineq}, it is sufficient to show that:
\beq{monotone_bid_opb_lemma_eq1}
\E{q_n(Y_n)w_n(Y_n)} \leq \E{q_n(Y_n)\overline{w}_n(Y_n)}.
\eeq

We have:
\beq{monotone_bid_opb_lemma_eq1A}
\E{q_n(Y_n)w_n(Y_n)} = \sum_{b_n \in \setB_n}\pb{b_n}\E{q_n(b_n,X_n)w_n(b_n,X_n)| B_n = b_n}.
\eeq
Using \eqref{eq:h-slope}, we can write:
\begin{align} \label{eq:monotone_bid_opb_lemma_eq2}
& \E{q_n(b_n,X_n)w_n(b_n,X_n) | B_n = b_n} \nonumber \\
& = \sum_{i=1}^{K_n} \pv{x_n^i}{b_n} q_n(b_n,x_n^i) w_n(b_n,x_n^i), \nonumber \\
& = \sum_{i=1}^{K_n} \pv{x_n^i}{b_n} q_n(b_n,x_n^i) \left(\frac{h_n^{b_n,i}-h_n^{b_n,i-1}}{g_n^{b_n,i}-g_n^{b_n,i-1}}\right), \nonumber \\
& = \sum_{i=1}^{K_n} q_n(b_n,x_n^i)(h_n^{b_n,i}-h_n^{b_n,i-1}), \nonumber \\
& = h_n^{b_n,K_n} q_n(b_n,x_n^{K_n}) -h_n^{b_n,0} q_n(b_n,x_n^1) - \sum_{i=1}^{K_n-1}h_n^{b_n,i}(q_n(b_n,x_n^{i+1})-q_n(b_n,x_n^i)).
\end{align}

Similarly, 
\begin{multline} \label{eq:monotone_bid_opb_lemma_eq3}
\E{q_n(b_n,X_n)\overline{w}_n(b_n,X_n) | B_n = b_n} \\ 
= \overline{h}_n^{b_n,K_n} q_n(b_n,x_n^{K_n}) - \overline{h}_n^{b_n,0} q_n(b_n,x_n^1) - \sum_{i=1}^{K_n-1}\overline{h}_n^{b_n,i}(q_n(b_n,x_n^{i+1})-q_n(b_n,x_n^i)).
\end{multline}

Since $\overline{h}_n^{b_n,i}$ is the point corresponding to $g_n^{b_n,i}$ on the convex hull of $[(g_n^{b_n,i},h_n^{b_n,i})]_{1 \leq i \leq K_n}$, we must have $h_n^{b_n,0} = \overline{h}_n^{b_n,0}$, $h_n^{b_n,K_n} = \overline{h}_n^{b_n,K_n}$, and $h_n^{b_n,i} \geq \overline{h}_n^{b_n,i}$. This, along with $q_n(b_n,x_n^{i+1}) \geq q_n(b_n,x_n^i)$, and \eqref{eq:monotone_bid_opb_lemma_eq2}-\eqref{eq:monotone_bid_opb_lemma_eq3}, gives:
\begin{multline*}
\E{q_n(b_n,X_n)w_n(b_n,X_n)- q_n(b_n,X_n)\overline{w}_n(b_n,X_n) | B_n = b_n} \\
= -\sum_{i=1}^{K_n-1}(h_n^{b_n,i}-\overline{h}_n^{b_n,i})(q_n(b_n,x_n^{i+1})-q_n(b_n,x_n^i)) \leq 0,
\end{multline*}
hence proving \eqref{eq:monotone_bid_opb_lemma_eq1}, and in turn, the inequality \eqref{eq:monotone_bid_ineq}.

Let $\overline{\bs{\pi}}$ be the allocation rule that maximizes $\sum_{n=1}^{N}Q_n(\mb{b},\mb{v})\overline{w}_n(b_n,v_n)$ for each bid vector $(\mb{b},\mb{v})$, subject to the FA constraint. Given $b_n$, if $0 \leq i < j \leq K_n$ are such that $h_n^{b_n,i} = \overline{h}_n^{b_n,i}$, $h_n^{b_n,k} > \overline{h}_n^{b_n,k}$ for $i+1\leq k \leq j-1$, and $h_n^{b_n,j} = \overline{h}_n^{b_n,j}$ (recall that $h_n^{b_n,i} \geq \overline{h}_n^{b_n,i}$), then $\overline{h}_n^{b_n,k}$ lies on the line joining $(g_n^{b_n,i},h_n^{b_n,i})$ and $(g_n^{b_n,j},h_n^{b_n,j})$. Hence, $\overline{w}_n(b_n,x_n^l) = \overline{w}_n(b_n,x_n^{l+1})$ for $i+1\leq l \leq j$; i.e., $\overline{w}_n$ is constant in this interval. This in turn implies that if $A \in 2^{\setN}$ such that $n \in A$, then $\overline{\pi}_A(\mb{b},x_n^l,\mb{v}_{-n})$ is constant in the interval $i+1\leq l \leq j$, given $\mb{b}$ and $\mb{v}_{-n}$. Let $\overline{Q}_n$ and $\overline{q}_n$ be obtained from $\overline{\bs{\pi}}$ by \eqref{eq:winning-prob} and \eqref{eq:q}. Then $\overline{q}_n(b_n,x_n^l)$ is also constant in the interval $i+1\leq l \leq j$. Moreover, from the construction of $\overline{w}_n$,
\begin{align*}
\sum_{l=i+1}^j \pv{x_n^l}{b_n} w_n(b_n,x_n^l) &= \sum_{l=i+1}^j \pv{x_n^l}{b_n} \overline{w}_n(b_n,x_n^l),\\
\Rightarrow \sum_{l=i+1}^j \pv{x_n^l}{b_n} w_n(b_n,x_n^l) \overline{q}_n(b_n,x_n^l) &= \sum_{l=i+1}^j \pv{x_n^l}{b_n} \overline{w}_n(b_n,x_n^l) \overline{q}_n(b_n,x_n^l).
\end{align*}
Thus, for all $n \in \setN$ and $b_n$, we have:
\begin{align}\label{eq:monotone_bid_opb_lemma_eq4}
\E{w_n(b_n,X_n) \overline{q}_n(b_n,X_n) | B_n = b_n} &= \E{\overline{w}_n(b_n,X_n) \overline{q}_n(b_n,X_n) | B_n = b_n}, \nonumber \\
\Rightarrow \E{\overline{q}_n(Y_n)w_n(Y_n)} &= \E{\overline{q}_n(Y_n)\overline{w}_n(Y_n)},
\end{align}
where the last equality follows from \eqref{eq:monotone_bid_opb_lemma_eq1A}. This completes the proof of the equality.

\section{Proof of Proposition \ref{proposition:mvv-order}} \label{sec:appendixC}
Define $F_{X_n|b_n}(z) \triangleq \Prob{X_n < z |B_n = b_n}$. Notice that this is the \textit{left continuous} cumulative distribution function (CDF) of the conditional random variable $(X_n | B_n = b_n)$. We start with the following lemmas:

\begin{lemma} \label{lemma:monotone-vv-order-lemma0}
For all $n \in \setN$ and $b_n$, $\overline{w}_n(b_n,x_n^i) < x_n^i$ for $1 \leq i \leq K_n -1$, and $\overline{w}_n(b_n,x_n^{K_n}) = x_n^{K_n}$.
\end{lemma}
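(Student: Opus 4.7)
The plan is to analyze the geometry of the points $P_i := (g_n^{b_n, i}, h_n^{b_n, i})$ for $0 \leq i \leq K_n$ and their lower convex hull. The crucial identity is that $h_n^{b_n, l} = -x_n^{l+1}(1 - g_n^{b_n, l})$, so the line from $P_l$ to the final point $P_{K_n} = (1, 0)$ has slope exactly $x_n^{l+1}$; these slopes are strictly increasing in $l$.

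First I would show that $P_{K_n - 1}$ must be a vertex of the lower convex hull. For any $l < K_n - 1$, evaluating the line from $P_l$ to $P_{K_n}$ (which has slope $x_n^{l+1} < x_n^{K_n}$) at $x = g_n^{b_n, K_n - 1}$ gives $y = -x_n^{l+1}\pv{x_n^{K_n}}{b_n}$, which strictly exceeds $h_n^{b_n, K_n - 1} = -x_n^{K_n}\pv{x_n^{K_n}}{b_n}$. Hence $P_{K_n - 1}$ lies strictly below every such line, so the hull cannot bypass it, forcing the final hull segment to be $[P_{K_n - 1}, P_{K_n}]$ with slope $x_n^{K_n}$. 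By \eqref{eq:h-slope-monotone}, this gives $\overline{w}_n(b_n, x_n^{K_n}) = x_n^{K_n}$.

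For the strict inequality with $i < K_n$, the slope $\overline{w}_n(b_n, x_n^i)$ equals the slope of some hull segment $[P_l, P_j]$ with $l + 1 \leq i \leq j$ and $P_l, P_j$ consecutive hull vertices; by the previous step, $j \leq K_n - 1$. Decomposing $1 - g_n^{b_n, l} = (g_n^{b_n, j} - g_n^{b_n, l}) + (1 - g_n^{b_n, j})$ and simplifying yields
\[
\overline{w}_n(b_n, x_n^i) = \frac{h_n^{b_n, j} - h_n^{b_n, l}}{g_n^{b_n, j} - g_n^{b_n, l}} = x_n^{l+1} - (x_n^{j+1} - x_n^{l+1}) \frac{1 - g_n^{b_n, j}}{g_n^{b_n, j} - g_n^{b_n, l}}.
\]
Since $l < j \leq K_n - 1$, both $x_n^{j+1} - x_n^{l+1}$ and $1 - g_n^{b_n, j}$ are strictly positive, so $\overline{w}_n(b_n, x_n^i) < x_n^{l+1} \leq x_n^i$, as required. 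The main delicacy is the geometric argument that $P_{K_n - 1}$ is always a hull vertex; once this is secured, the equality at $i = K_n$ and the strict inequality for $i < K_n$ follow from a single slope computation.
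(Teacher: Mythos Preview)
Your proof is correct and takes essentially the same geometric route as the paper: both arguments show that $P_{K_n-1}$ is a vertex of the lower convex hull (yielding the equality $\overline{w}_n(b_n,x_n^{K_n})=x_n^{K_n}$) and then bound $\overline{w}_n(b_n,x_n^i)$ above by $x_n^{l+1}\leq x_n^i$, where $P_l$ is the left hull vertex of the segment covering index~$i$. The only difference is cosmetic: the paper splits into the regular case (where $\overline{w}_n=w_n$ and the bound is immediate from \eqref{eq:virtual-bid}) and the ironed case (where it uses $\overline{w}_n(b_n,x_n^{l+1})<w_n(b_n,x_n^{l+1})<x_n^{l+1}$), whereas you avoid the case split by computing the slope of an arbitrary hull chord $[P_l,P_j]$ directly from the identity $h_n^{b_n,k}=-x_n^{k+1}(1-g_n^{b_n,k})$, obtaining a formula that specializes to \eqref{eq:virtual-bid} when $j=l+1$.
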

\begin{proof}
If $\overline{w}_n(b_n,x_n^i) = w_n(b_n,x_n^i)$, then this is trivially true. Given $b_n$, if $0 \leq i < j \leq K_n$ are such that $h_n^{b_n,i} = \overline{h}_n^{b_n,i}$, $h_n^{b_n,k} > \overline{h}_n^{b_n,k}$ for $i+1\leq k \leq j-1$, and $h_n^{b_n,j} = \overline{h}_n^{b_n,j}$, then $\overline{w}_n(b_n,x_n^k)$ is constant in the interval $i+1 \leq k \leq j$. Hence, 
$\overline{w}_n(b_n,x_n^{k}) = \overline{w}_n(b_n,x_n^{i+1}) < w_n(b_n,x_n^{i+1}) < x_n^{i+1} < x_n^k.$ This completes the proof of inequality. Also, $w_n(b_n,x_n^{K_n}) = x_n^{K_n} > w_n(b_n,x_n^i)$, for any $1 \leq i \leq K_n -1$.  Thus, the convex hull of points $[(g_n^{b_n,i},h_n^{b_n,i})]_{1 \leq i \leq K_n}$ always contains the line connecting $(g_n^{b_n,K_n-1},h_n^{b_n,K_n-1})$ to $(g_n^{b_n,K_n},h_n^{b_n,K_n})$ (recall the construction of $\overline{w}_n$ in Section \ref{sec:moap-solution}). Hence, $\overline{w}_n(b_n,x_n^{K_n}) = w_n(b_n,x_n^{K_n}) = x_n^{K_n}$, and the proof is complete.
\end{proof}

\begin{lemma} \label{lemma:monotone-vv-order-lemma1}
For all $n \in \setN$, $b_n$, and $1 \leq i \leq K_n -1$, 
\beq{monotone-vv-order-eq1}
\overline{w}_n(b_n,x_n^i) = \argmin_{c < x_n^i}\max_{z \in \left[x_n^1,x_n^{K_n}\right]}\frac{(z-c)(1-F_{X_n|b_n}(z))}{x_n^i - c},
\eeq
with the convention that if more than one value of $c$ minimizes the maximum, then the largest such~$c$ is selected.
\end{lemma}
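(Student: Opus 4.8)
The plan is to translate the right-hand side of \eqref{eq:monotone-vv-order-eq1} into the language of the lower convex hull used to define $\overline{w}_n$, and then read off the identity geometrically. First I would rewrite the inner maximization. For a fixed $c < x_n^i$, note that $x_n^i - c > 0$, so minimizing over the objective is the same as minimizing $\max_{z}(z-c)(1-F_{X_n|b_n}(z))$ scaled by $1/(x_n^i-c)$. Because $F_{X_n|b_n}$ is the left-continuous CDF, $(1-F_{X_n|b_n}(z)) = \Prob{X_n \geq z \mid B_n=b_n}$ is a right-continuous step function that is constant on each interval $[x_n^l, x_n^{l+1})$, so the supremum over $z \in [x_n^1, x_n^{K_n}]$ is attained at one of the breakpoints $z = x_n^l$; thus the inner max is $\max_{1 \leq l \leq K_n}(x_n^l - c)\big(\sum_{m=l}^{K_n}\pv{x_n^m}{b_n}\big)$. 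The next step is to recognize each term $(x_n^l - c)\sum_{m=l}^{K_n}\pv{x_n^m}{b_n}$ in terms of the points $(g_n^{b_n,l-1}, h_n^{b_n,l-1})$ from \eqref{eq:gh}: writing $c$ as a candidate intercept, this quantity equals $-c\cdot g$-coordinate-complement plus the $h$-value, i.e.\ it is (up to sign) the signed vertical gap between the point $(g_n^{b_n,l-1}, h_n^{b_n,l-1})$ and the line of slope $c$ through the terminal point $(g_n^{b_n,K_n}, h_n^{b_n,K_n}) = (1,0)$. Concretely, $h_n^{b_n,l-1} = -x_n^l \sum_{m=l}^{K_n}\pv{x_n^m}{b_n}$ and $g_n^{b_n,l-1} = \sum_{m=1}^{l-1}\pv{x_n^m}{b_n} = 1 - \sum_{m=l}^{K_n}\pv{x_n^m}{b_n}$, so $(x_n^l - c)\sum_{m=l}^{K_n}\pv{x_n^m}{b_n} = -h_n^{b_n,l-1} - c(1 - g_n^{b_n,l-1}) = -\big(h_n^{b_n,l-1} - c(g_n^{b_n,l-1} - 1)\big)$, which is exactly the negative of the height of the point $(g_n^{b_n,l-1}, h_n^{b_n,l-1})$ above the line through $(1,0)$ with slope $c$.

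Having made this identification, the inner maximization over $l$ becomes: find the line of slope $c$ through $(1,0)$ that lies weakly below all the points $(g_n^{b_n,l}, h_n^{b_n,l})$, $0 \leq l \leq K_n$, and measure the largest vertical gap; equivalently, the inner max equals $-\min_l \big(h_n^{b_n,l} - c(g_n^{b_n,l}-1)\big)$, the most negative intercept-type deviation. Then minimizing the normalized quantity $\tfrac{1}{x_n^i - c}\cdot(\text{inner max})$ over $c < x_n^i$ is a fractional/parametric optimization whose optimizer, by the standard geometry of convex hulls, is the slope of the supporting line of the lower convex hull of $\{(g_n^{b_n,l}, h_n^{b_n,l})\}$ at abscissa $g_n^{b_n,i}$ — and this slope is precisely $\overline{w}_n(b_n, x_n^i)$ by \eqref{eq:h-slope-monotone}. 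I would make this rigorous by splitting into two cases matching the construction of $\overline{w}_n$: (i) if $w_n$ is "already convex" at index $i$, i.e.\ $\overline{h}_n^{b_n,i} = h_n^{b_n,i}$, then the supporting line through $(g_n^{b_n,i-1},h_n^{b_n,i-1})$ and $(g_n^{b_n,i},h_n^{b_n,i})$ witnesses $c = \overline{w}_n(b_n,x_n^i)$ and one checks directly that no smaller objective value is attainable; (ii) if $\overline{w}_n$ is constant on a block $[i+1, j]$ arising from a skipped hull segment (as in the proof of Lemma~\ref{lemma:monotone-vv-order-lemma0}), then the single hull edge from $(g_n^{b_n,i},h_n^{b_n,i})$ to $(g_n^{b_n,j},h_n^{b_n,j})$ gives the common slope value, and the tie-breaking convention "select the largest minimizing $c$" is exactly what picks out that edge's slope rather than some steeper chord.

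The main obstacle I anticipate is the careful handling of the argmin and the tie-breaking convention: I need to verify both that $c = \overline{w}_n(b_n,x_n^i)$ achieves the minimax value and that, among all $c$ achieving it, it is the largest — this requires showing that any $c > \overline{w}_n(b_n,x_n^i)$ strictly increases the objective (because the hull edge at $g_n^{b_n,i}$ is the steepest line through $(1,0)$ still lying below all the points, on its right side) while any $c < \overline{w}_n(b_n,x_n^i)$ does not decrease it (by convexity of the hull on the left side). The constraint $c < x_n^i$, together with $\overline{w}_n(b_n,x_n^i) < x_n^i$ from Lemma~\ref{lemma:monotone-vv-order-lemma0}, guarantees the optimizer is interior to the feasible region, so no boundary complication arises. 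The rest — interchanging the max over $z$ with the breakpoint evaluation, and the algebraic rewriting of $(x_n^l-c)\sum\pv{\cdot}{b_n}$ in terms of $g,h$ — is routine bookkeeping that I would not belabor.
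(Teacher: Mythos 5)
Your plan is correct and follows essentially the same route as the paper's proof: translate $(z-c)(1-F_{X_n|b_n}(z))$ into the $(g,h)$-coordinates of \eqref{eq:gh}, observe that the inner maximum is attained at the breakpoints, and read off the argmin from the lower convex hull together with the tie-breaking rule and Lemma~\ref{lemma:monotone-vv-order-lemma0}. The one place where the paper is slightly tighter is the minimization over $c$: the paper interprets the \emph{full ratio} $(z-c)(1-F_{X_n|b_n}(z))/(x_n^i-c)$ as the horizontal distance from $(1,0)$ of the intersection of the slope-$c$ tangent with the fixed ``line for $x_n^i$'' (the line through $(0,-x_n^i)$ and $(1,0)$, which passes through $(g_n^{b_n,i-1},h_n^{b_n,i-1})$); this makes the inner-max/outer-min simultaneously a single geometric statement about where the hull meets that line, whereas your sketch interprets only the numerator geometrically and then invokes ``standard geometry'' for the normalized minimization, a step you would need to fill in roughly as the paper does. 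Also note the key vertex is at abscissa $g_n^{b_n,i-1}$ (the right-side supporting slope there, which coincides with the left-side slope at $g_n^{b_n,i}$, is $\overline{w}_n(b_n,x_n^i)$), and your case~(i) should require that both $(g_n^{b_n,i-1},h_n^{b_n,i-1})$ and $(g_n^{b_n,i},h_n^{b_n,i})$ lie on the hull, not just the latter; these are minor bookkeeping points, not flaws in the approach.
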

\begin{proof}
From \eqref{eq:gh}, for any $1 \leq i \leq K_n$,
\beq{monotone-vv-order-eq2}
(g_n^{b_n,i-1},h_n^{b_n,i-1}) = \big(F_{X_n|b_n}(x_n^i),-x_n^i(1-F_{X_n|b_n}(x_n^i)\big).
\eeq
Let $I(z) \triangleq \big(F_{X_n|b_n}(z),-z(1-F_{X_n|b_n}(z)\big)$ for $z \in [x_n^1,x_n^{K_n}]$. Consider the plot of points $I(z)$'s. The convex hull of the points $[I(z)]_{z \in [x_n^1,x_n^{K_n}]}$ is same as that of the points $[(g_n^{b_n,i},h_n^{b_n,i})]_{0 \leq i \leq K_n-1}$. Note that, from the proof of Lemma \ref{lemma:monotone-vv-order-lemma0}, the convex hulls of the points $[(g_n^{b_n,i},h_n^{b_n,i})]_{0 \leq i \leq K_n-1}$ and $[(g_n^{b_n,i},h_n^{b_n,i})]_{0 \leq i \leq K_n}$ differ from each other by just the line segment connecting $(g_n^{b_n,K_n-1},h_n^{b_n,K_n-1})$ to $(g_n^{b_n,K_n},h_n^{b_n,K_n})$. Hence, $\overline{w}_n(b_n,x_n^i)$'s for $1 \leq i \leq K_n -1$ are obtained as the slopes of the convex hull of points $I(z)$'s. Fix $x_n^i$ for some $1 \leq i \leq K_n -1$. Call the line from $(0,-x_n^i)$ to $(1,0)$ {\em the line for $x_n^i$}. Given $z$ and $c < x_n^i$, consider the line through the point $I(z)$ with slope $c$, and let $J$ be the point of intersection of this line with the line for $x_n^i$. Then, $(z-c)(1-F_{X_n|b_n}(z))/(x_n^i - c)$ is the horizontal distance of $J$ from the vertical line at $(1,0)$. Taking the maximum over $z$ corresponds to the point $J_c$ which is the intersection of the line of slope $c$ that is tangent to the plot, and the line for $x_n^i$. Then the minimizing $c$ is the slope of the tangent at the point $J^*$ where the convex hull of $I(z)$ intersects the line for $x_n^i$, and hence, same as $\overline{w}_n(b_n,x_n^i)$. If there is more than one intersection point, the largest is selected. From Lemma \ref{lemma:monotone-vv-order-lemma0}, the minimum is achieved by $c < x_n ^i$. This completes the proof.
\end{proof}

For $c < x_n^{K_n}$, define:
\beqn
\Phi_{X_n|b_n}(c) \triangleq \max_{z \in \left[x_n^1,x_n^{K_n}\right]}(z-c)(1-F_{X_n|b_n}(z)).
\eeqn
Notice that $\Phi_{X_n|b_n}(c)$ is nonincreasing in $c$.

\begin{lemma} \label{lemma:monotone-vv-order-lemma2}
Let $s,t \in \setB_n$ be such that $s \subseteq t$. Then under Assumption \ref{assumption:hazard-order}, $\Phi_{X_n|s}(c)/\Phi_{X_n|t}(c)$ is nonincreasing in $c$. 
\end{lemma}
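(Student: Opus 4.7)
The plan is to establish the cross-multiplied equivalent
\begin{equation*}
\Phi_{X_n|s}(c_1)\,\Phi_{X_n|t}(c_2) \;\geq\; \Phi_{X_n|s}(c_2)\,\Phi_{X_n|t}(c_1)
\end{equation*}
for arbitrary $c_1 < c_2 < x_n^{K_n}$, which is equivalent to the stated monotonicity once one observes that $\Phi_{X_n|b}(c) > 0$ whenever $c < x_n^{K_n}$: plugging $z = x_n^{K_n}$ into the defining maximum and using $\pv{x_n^{K_n}}{b} > 0$ gives a strictly positive value, so all denominators are safely nonzero.

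For brevity write $\bar G_b(z) \triangleq 1 - F_{X_n|b}(z)$, and pick maximizers $z^{*}, y^{*} \in [x_n^1, x_n^{K_n}]$ with $\Phi_{X_n|s}(c_2) = (z^{*} - c_2)\bar G_s(z^{*})$ and $\Phi_{X_n|t}(c_1) = (y^{*} - c_1)\bar G_t(y^{*})$; positivity of both quantities forces $z^{*} > c_2$ and $y^{*} > c_1$. The heart of the plan is a case split on the ordering of $y^{*}$ and $z^{*}$, in which these two points are fed as \emph{suboptimal} test points into the two $\Phi$'s on the left-hand side, with the pairing depending on the case.

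If $y^{*} \geq z^{*}$, I insert $z^{*}$ into $\Phi_{X_n|s}(c_1)$ and $y^{*}$ into $\Phi_{X_n|t}(c_2)$. The $\bar G$-factors then match on both sides of the target inequality and cancel, reducing it to $(z^{*} - c_1)(y^{*} - c_2) \geq (z^{*} - c_2)(y^{*} - c_1)$, which rearranges to $(z^{*} - y^{*})(c_1 - c_2) \geq 0$ and holds automatically. If instead $y^{*} < z^{*}$, I swap the pairing and insert $y^{*}$ into $\Phi_{X_n|s}(c_1)$ and $z^{*}$ into $\Phi_{X_n|t}(c_2)$. Now the scalar factors $(y^{*} - c_1)(z^{*} - c_2)$ match on both sides and cancel, leaving $\bar G_s(y^{*})\bar G_t(z^{*}) \geq \bar G_s(z^{*})\bar G_t(y^{*})$, equivalently $\bar G_s(z^{*})/\bar G_s(y^{*}) \leq \bar G_t(z^{*})/\bar G_t(y^{*})$ for $z^{*} \geq y^{*}$. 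Because $\bar G_b$ is constant on intervals $(x_n^{k-1}, x_n^k]$, equal there to the tail sum $\sum_{\ell \geq k}\pv{x_n^\ell}{b}$, this ratio bound is exactly the content of Assumption~\ref{assumption:hazard-order} in the form \eqref{eq:hazard-rate-assumption}.

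The only real subtlety is spotting the asymmetric pairing between the two cases: if one naively pairs ``like with like'' in both cases, one ends up needing the \emph{reverse} of the hazard-rate inequality in the second case, which is false in general. Once the correct (swapped) pairing is identified, each case collapses to either straightforward algebra or to a single application of the hazard-rate assumption, which is why I expect this to be the main (mild) obstacle but not a serious one.
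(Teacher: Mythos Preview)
Your proposal is correct and follows essentially the same approach as the paper's proof: the paper likewise fixes $c_1<c_2$, picks the maximizers $z_2^{s}$ (your $z^{*}$) for $\Phi_{X_n|s}(c_2)$ and $z_1^{t}$ (your $y^{*}$) for $\Phi_{X_n|t}(c_1)$, splits into the cases $z_1^{t}\le z_2^{s}$ and $z_1^{t}\ge z_2^{s}$, and in each case inserts exactly the same suboptimal test points you describe, reducing to the hazard-rate inequality in one case and to the elementary scalar inequality in the other. The only cosmetic difference is that the paper works with the ratio $\Phi_{X_n|s}/\Phi_{X_n|t}$ directly rather than in cross-multiplied form.
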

\begin{proof}
Fix $c_1 < c_2 < x_n^{K_n}$. We need to prove that:
\beqn
\frac{\Phi_{X_n|s}(c_1)}{\Phi_{X_n|t}(c_1)}\leq \frac{\Phi_{X_n|s}(c_2)}{\Phi_{X_n|t}(c_2)}.
\eeqn 
Under Assumption \ref{assumption:hazard-order}, $(1-F_{X_n|s}(z))/(1-F_{X_n|t}(z))$ is nonincreasing in $z$. Let $z_1^{t}$ and $z_2^{s}$ denote the values of $z$ achieving the maximum in the definition of $\Phi_{X_n|t}(c_1)$ and $\Phi_{X_n|s}(c_2)$ respectively. Clearly, $z_1^{t} > c_1$ and $z_2^{s} > c_2$. If $z_1^{t} \leq z_2^{s}$, 
\begin{multline*}
\frac{\Phi_{X_n|s}(c_1)}{\Phi_{X_n|t}(c_1)} \geq \frac{(z_1^t-c_1)(1-F_{X_n|s}(z_1^t))}{(z_1^t-c_1)(1-F_{X_n|t}(z_1^t))} 
= \frac{1-F_{X_n|s}(z_1^t)}{1-F_{X_n|t}(z_1^t)} \\ 
\geq \frac{1-F_{X_n|s}(z_2^s)}{1-F_{X_n|t}(z_2^s)} = \frac{(z_2^s-c_2)(1-F_{X_n|s}(z_2^s))}{(z_2^s-c_2)(1-F_{X_n|t}(z_2^s))}
\geq \frac{\Phi_{X_n|s}(c_2)}{\Phi_{X_n|t}(c_2)},
\end{multline*}

On the other hand,  if $z_1^t \geq z_2^s$,
\beqn
\frac{\Phi_{X_n|s}(c_1)}{\Phi_{X_n|t}(c_1)}
\geq \frac{(z_2^s-c_1)(1-F_{X_n|s}(z_2^s))}{(z_1^t-c_1)(1-F_{X_n|t}(z_1^t))} 
\geq \frac{(z_2^s-c_2)(1-F_{X_n|s}(z_2^s))}{(z_1^t-c_2)(1-F_{X_n|t}(z_1^t))}
\geq \frac{\Phi_{X_n|s}(c_2)}{\Phi_{X_n|t}(c_2)}.
\eeqn
In either case, the required inequality is proved.
\end{proof}

Combining Lemma \ref{lemma:monotone-vv-order-lemma1} and Lemma \ref{lemma:monotone-vv-order-lemma2}, for any $c \leq \overline{w}_n(t,x_n^i)$, and $1 \leq i \leq K_n -1$, 
\beqn
\frac{\Phi_{X_n|s}(c)}{x_n^i-c}  \geq
\frac{\Phi_{X_n|t}(c) \Phi_{X_n|s}(\overline{w}_n(t,x_n^i))}{(x_n^i-c)  \Phi_{X_n|t}(\overline{w}_n(t,x_n^i))} 
\geq \frac{\Phi_{X_n|t}(\overline{w}(t,x_n^i) ) \Phi_{X_n|s}(\overline{w}(t,x_n^i))}{(x_n^i-\overline{w}(t,x_n^i))  \Phi_{X_n|t}(\overline{w}(t,x_n^i))}
= \frac{\Phi_{X_n|s}(\overline{w}(t,x_n^i))}{x_n^i-\overline{w}(t,x_n^i)},
\eeqn
where the first inequality is by Lemma \ref{lemma:monotone-vv-order-lemma2}, and the second is because $\Phi_{X_n|b_n}(c)$ is nonincreasing in $c$. Hence, from Lemma \ref{lemma:monotone-vv-order-lemma1}, it follows that $\overline{w}_n(s,x_n^i) \geq \overline{w}_n(t,x_n^i)$ for $1\leq i  \leq K_n-1$. Also $\overline{w}_n(s,x_n^{K_n}) = \overline{w}_n(t,x_n^{K_n}) = x_n^{K_n}$. This completes the proof of Proposition \ref{proposition:mvv-order}.
 
\bibliographystyle{ieeetr}
\bibliography{AuctionTheory}


\end{document}